% =========================================================================
% SciPost LaTeX template
% Version 1a (2016/06/14)
%
% Submissions to SciPost Journals should make use of this template.
%
% INSTRUCTIONS: simply look for the `TODO:' tokens and adapt your file.
%
% - please enable line numbers (package: lineno)
% - you should run LaTeX twice in order for the line numbers to appear
% =========================================================================

% TODO: uncommente ONE of the class declarations below
% If you are submitting a paper to SciPost Physics: uncomment next line
\documentclass{SciPost}
% If you are submitting a paper to SciPost Physics Lecture Notes: uncomment next line
%\documentclass[LectureNotes]{SciPost}
\usepackage{cite}
\usepackage{braket}
\usepackage{mathtools}
\usepackage{enumerate}
\usepackage{comment}
\usepackage{color}
\usepackage{nicefrac}
\usepackage[amsmath]{empheq}

\newcommand{\be}{\begin{equation}}
\newcommand{\ee}{\end{equation}}
\newcommand{\ba}{\begin{aligned}}
\newcommand{\ea}{\end{aligned}}
\newcommand{\1}{{\rm I}}

\newcommand{\bs}{\boldsymbol}
\newcommand{\K}{\mathfrak e}

\usepackage{amsthm}

\newtheorem*{proposition}{Proposition}
\newtheorem{lemma}{Lemma}

\newcommand\lagrangeprime[1]{^{%
\ifcase#1 \or\prime\or\prime\prime\or\prime\prime\prime\else{}_{\mathrm{\romannumeral #1}}\fi}}

\begin{document}

\begin{center}{\Large \textbf{
On the size of the space spanned by a nonequilibrium state in a quantum spin lattice system
}}\end{center}

% TODO: write the author list here. Use initials + surname format.
% Separate subsequent authors by a comma, omit comma at the end of the list.
% Mark the corresponding author with a superscript *.
\begin{center}
Maurizio Fagotti\textsuperscript{1}
\end{center}

% TODO: write all affiliations here.
% Format: institute, city, country
\begin{center}
{\bf 1} LPTMS, UMR 8626, CNRS, Univ. Paris-Sud, Universit\'e Paris-Saclay, 91405 Orsay, France
\\
% TODO: provide email address of corresponding author
* maurizio.fagotti@lptms.u-psud.fr
\end{center}

\begin{center}
\today
\end{center}

% For convenience during refereeing: line numbers
%\linenumbers

\section*{Abstract}
{\bf
% TODO: write your abstract here.
We consider the time evolution of a state in an isolated quantum spin lattice system with energy cumulants proportional to the number of the sites $L^d$. We compute the distribution of the eigenvalues of the time averaged state over a time window $[t_0,t_0+t]$ in the limit of large $L$. 
This allows us to infer the size of a subspace that captures time evolution in $[t_0,t_0+t]$ with an accuracy $1-\epsilon$.  We estimate the size to be $ \frac{\sqrt{2\K_2}}{\pi}\mathrm{erf}^{-1}(1-\epsilon) L^{\frac{d}{2}}t$, where $\K_2$ is the energy variance per site, and $\mathrm{erf}^{-1}$ is the inverse error function. 
}

% TODO: include a table of contents (optional)
% Guideline: if your paper is longer that 6 pages, include a TOC
% To remove the TOC, simply cut the following block
\vspace{10pt}
\noindent\rule{\textwidth}{1pt}
\tableofcontents\thispagestyle{fancy}
\noindent\rule{\textwidth}{1pt}
\vspace{10pt}

%%%%%%%%%%%%
\section{Introduction} %
%%%%%%%%%%%%

Nonequilibrium dynamics in isolated many-body quantum systems have been being perceived as a captivating area where to look for the missing link between the physics of systems in equilibrium and the fundamental equation of quantum mechanics, the Schr\"odinger equation.  On the one hand, the thermodynamic limit (large number of sites in spin lattice systems or large number of particles in particle systems) opens the door to critical phenomena, like spontaneous symmetry breaking and phase transitions. On the other hand, the limit of large time is characterised by a loss of information that resembles the statistical equivalence of microscopic configurations in a system in thermal equilibrium. 

A weak form of relaxation can be defined by taking the time averages of the expectation values of the observables and sending the time to infinity~\cite{diagRigol}. Generally, this limit exists both when the number of degrees of freedom is finite and in the thermodynamic limit, but the non-commutativity of the limit of infinite time with the thermodynamic limit makes the definition ambiguous\footnote{Problems arise especially in the absence of translational invariance~\cite{chargecurr}.}.

The time average of a nonequilibrium state has been extensively studied when the thermodynamic limit is taken after the infinite time limit, which is the setting of the quantum recurrence theorem~\cite{q_rec}. In the opposite order of limits, the limit of infinite time usually exists in a stronger sense, without average, and indeed most of the studies have been focussed on the infinite time limit of the expectation values~\cite{quenchreview}. 
There is however additional information that can be extracted from the time average when the number of degrees of freedom is large. Specifically, we can use it to estimate the size of the space spanned by the state in a given time window. In this paper we show that, for a relevant class of systems characterised by extensive energy cumulants, this estimate is almost independent of the system details. 

%%%%%%%%%%%%%%%%
\subsection*{Physical setting}  %
%%%%%%%%%%%%%%%%

We consider a quantum spin lattice system with $L^d$ sites, $d$ being the dimension of the lattice. The system is prepared in a pure state $\ket{\Psi_0}$ that time evolves under a spin-lattice Hamiltonian $\bs H$: $\ket{\Psi_t}=e^{- i \bs H t}\ket{\Psi_0}$. We do not specify other details, but we assume that the energy cumulants, denoted by $L^d \K_n$, are proportional to the number of the sites
\be\label{eq:cumulants}
L^d \K_n=\partial^n_t\Bigr|_{t=0}\log\braket{\Psi_0|e^{t \bs H}|\Psi_0}\propto L^d\, .
\ee 
Equation \eqref{eq:cumulants} is satisfied in generic spin lattice systems as long as interactions and correlations decay sufficiently fast to zero with the distance\footnote{Exceptions are known even for local Hamiltonians if the initial state has power law decaying correlations~\cite{superext}.}; in particular, if the initial state has a finite correlation length, any local (gapless or gapped) Hamiltonian satisfies \eqref{eq:cumulants} (see Appendix~\ref{a:extcum}).

Let us consider the time averaged state (see Appendix~\ref{a:averages} for alternative averages):
\be
\bar {\bs \rho}_{t_0,t}=\int_{t_0}^{t_0+t}\frac{\mathrm d \tau}{t}\ket{\Psi_\tau}\bra{\Psi_\tau}\, .
\ee
For given $L$, since the local space is finite\footnote{If $s$ is the local spin, the dimension of the local space is  $2s+1$.}, the spectrum of $\bs H$ is discrete, and hence the infinite time limit exists; it is given by the so-called ``diagonal ensemble''~\cite{diagRigol,microdiagent}
\be
\lim_{t\rightarrow\infty}\bar {\bs \rho}_{t_0,t}=\sum_{E}|\braket{\Psi_0|E}|^2\ket{E}\bra{E}\, ,
\ee
where $\ket{E}$ form a basis diagonalizing $\bs H$. 

We consider here the opposite limit of finite $t$ and large $L$. We wonder how big it is the dimension $\mathfrak D_t$ of the space spanned by $\ket{\Psi_0}$ in the time window $[t_0,t_0+t]$. Strictly speaking, this is given by the rank of $\bar {\bs \rho}_{t_0,t}$. The latter is however sensitive to infinitesimally small perturbations which do not really affect the dynamics of the physical observables. It is then more useful to approximate the state up to a given accuracy so as to reduce the dimension of the subspace, still capturing the relevant part of the dynamics. We do it at the level of the time averaged state, introducing a low-probability cutoff $\epsilon_t$, possibly dependent on the width of the time window, for the eigenvalues of $\bar {\bs \rho}_{t_0,t}$. This leads to the following definition
\be\label{eq:system}
\ba
\mathfrak D_t^{(\epsilon_t)}  &=\mathrm{tr}[\theta_H(\bar {\bs \rho}_{t_0,t}-\lambda_{\epsilon_t})]\\ \epsilon_t&=\mathrm{tr}[\bar {\bs \rho}_{t_0,t}\theta_H(\lambda_{\epsilon_t}-\bar {\bs \rho}_{t_0,t})]\, ,
\ea
\ee
where $\theta_H(x)$ is the Heaviside step function, and $\lambda_{\epsilon_t}$ is the corresponding cutoff in the eigenvalues.
Note that the rank of $\bar {\bs \rho}_{t_0,t}$ can be obtained as a limit: $\mathfrak D_t=\lim_{\epsilon_t\rightarrow 0} \mathfrak D_t^{(\epsilon_t)}$. 
We will come back to the choice of $\epsilon_t$ later; we focus first on the solution of \eqref{eq:system} for given $\epsilon_t$.

%%%%%%%%%%%%%%
\subsection*{Loschmidt echo} %
%%%%%%%%%%%%%%

Finding a solution to  \eqref{eq:system} is a hard problem, but there are crucial simplifications in the limit of large $L$. As it will be clear in the next section, these simplifications can be traced back to the behaviour of the overlap between the state at different times 
\be\label{eq:overlap}
\braket{\Psi_{t_1}|\Psi_{t_2}}=\braket{\Psi_{0}|e^{i\bs H(t_1-t_2)}|\Psi_{0}}\, .
\ee
We remind the reader that the square of the absolute value of the overlap is also known as Loschmidt echo~\cite{Peres,JalabertPastawski} in the specific case when the backward evolution is generated by a Hamiltonian which $\ket{\Psi_0}$ is eigenstate of.  The series expansion about $t_1=t_2$ of the logarithm of the overlap can be written as follows
\be\label{eq:overlapseries}
\log \braket{\Psi_{t_1}|\Psi_{t_2}}=L^d\sum_{n=1}^\infty \frac{i^n \K_n}{n!} (t_1-t_2)^n\, ,
\ee
where, by assumption (\emph{cf}. \eqref{eq:cumulants}), each order of the expansion is proportional to the number of the sites, \emph{i.e.}, it is ``extensive''. In quantum spin lattice systems, extensivity is a non-perturbative feature, indeed one generally finds
\be\label{eq:dynfreeen}
\exists \lim_{L\rightarrow\infty } -\frac{\log \braket{\Psi_{t}|\Psi_{0}}}{L^d}\equiv f(t)\, .
\ee
By definition $f(t)$ is a nonnegative function with a zero at $t=0$. Generally it remains finite in the limit of infinite time, and it can exhibit non-analytic behavior, which has been a subject of intensive investigations since 2012~\cite{dynphasetrans}. 
(The interested reader can find some numerical data, showing the behaviour of the Loschmidt echo in one- and two-dimensional lattices, in Refs~\cite{LE_nonint,LE2d}.) 
A simple but powerful property that we are going to assume and exploit is that
\begin{description}
\item \emph{exceptions apart\footnote{If $f(t)$ has more than one zero, it must have infinitely many equidistant zeros, corresponding to times at which the system returns to the initial state (with potential discrepancies approaching zero in the thermodynamic limit). Generally, these are trivial situations where, for example, the energy levels are equidistant. }, $f(t)$ has a single zero on the real line.}
\end{description}
In other words, the overlap is exponentially small (in the number of the sites) everywhere but in the neighbourhoods of $t=0$. The time window where it is not exponentially small has to shrink to zero in the thermodynamic limit $L\rightarrow\infty$. In that region, the series expansion in the time \eqref{eq:overlapseries} can also be interpreted as an asymptotic expansion in the number of the sites.

%%%%%%%%%%%
\section{Entropies}  %
%%%%%%%%%%%

We compute here the moments of the distribution of the eigenvalues of $\bar {\bs \rho}_{t_0,t}$, namely $\mathrm{tr}[\bar {\bs \rho}_{t_0,t}^\alpha]$ for integer $\alpha$. First of all, we note that they are independent of $t_0$, as $\bar {\bs \rho}_{t_0,t}=e^{-i \bs H t_0}\bar {\bs \rho}_{0,t}e^{i \bs H t_0}$ is unitarily equivalent to $\bar {\bs \rho}_{0,t}$ and the moments are invariant under unitary transformations. From now on we will write $\bar {\bs \rho}_{t}$ instead of $\bar {\bs \rho}_{0,t}$. 
We start with the second moment $\mathrm{tr}[\bar {\bs \rho}_{t}^2]$
\begin{multline}\label{eq:rho2}
\mathrm{tr}[\bar {\bs \rho}_{t}^2]=\iint\limits_{[0,t]^2}\frac{\mathrm d \tau_2\mathrm d \tau_1}{t^2}|\braket{\Psi_{0}|e^{i \bs H(\tau_2-\tau_1)}|\Psi_{0}}|^2=
\iint\limits_{[0,t]^2}\frac{\mathrm d \tau_2\mathrm d \tau_1 }{t^2}e^{2L^d\sum_{n=1}^{\infty} \K_{2n}(-1)^n\frac{(\tau_2-\tau_1)^{2n}}{(2n)!}}=\\
\frac{2}{L^{\frac{d}{2}}}\int_{0}^{L^{\frac{d}{2}}}\mathrm dy (1-\frac{y}{L^{\frac{d}{2}}}) e^{-\K_2 y^2 t^2+2\sum_{n=2}^{\infty} \K_{2n} (-1)^n\frac{(y t)^{2n}}{(2n)!L^{d(n-1)}}}\, , 
\end{multline}
where we have formally replaced the function $f(t)$ with its series expansion. 
The contributions to the integral coming from regions where $y$ increases with $L$ are subleading\footnote{This can be readily seen by splitting the integration domain into $[0,L^{\frac{d}{8}}]\cup [L^{\frac{d}{8}},L^{\frac{d}{2}}]$ and imposing that $f(t)\geq c \min(t,\delta t)^2$, for some positive finite $\delta t$ and $c$.}, and, in turn, only the term proportional to $\K_2$ survives the limit. We then find
\be\label{eq:m2}
\mathrm{tr}[\bar {\bs \rho}_{t}^2]=\sqrt{\frac{\pi}{\K_2}}t^{-1}L^{-\frac{d}{2}}+O(L^{-d}) \, .
\ee
We stress again that here $t$ is a finite nonzero parameter (this expression does not capture the behaviour for  $t\sim L^{-\frac{d}{2}}$).  From \eqref{eq:m2} we infer the asymptotic behaviour of the second R\'enyi entropy $S_2[\bar {\bs \rho}_{t}]=-\log \mathrm{tr}[\bar {\bs \rho}_{t}^2]$
\be
S_2[\bar {\bs \rho}_{t}]=\frac{d}{2}\log L+\frac{1}{2}\log\frac{\K_2 t^2}{\pi}+O(L^{-\frac{d}{2}})\, .
\ee

We now compute a generic moment $\mathrm{tr}[\bar {\bs \rho}_{t}^\alpha]$. We have
\begin{multline}\label{eq:rhoalpha}
\mathrm{tr}[\bar {\bs \rho}_{t}^\alpha]=\idotsint\limits_{[0,t]^\alpha}\frac{\mathrm d^\alpha \tau}{t^\alpha}e^{L^d\sum_{n=1}^{\infty} \K_{2n}(-1)^n\frac{(\tau_\alpha-\tau_1)^{2n}+\sum_{j=1}^{\alpha-1}(\tau_j-\tau_{j+1})^{2n}}{(2n)!}}\times\\
\cos\Bigl(L^d\sum_{n=1}^{\infty} \K_{2n+1}(-1)^n\frac{(\tau_\alpha-\tau_1)^{2n+1}+\sum_{j=1}^{\alpha-1}(\tau_j-\tau_{j+1})^{2n+1}}{(2n+1)!}\Bigr)\, .
\end{multline}
The considerations made for $\alpha=2$ hold true also for $\alpha>2$: for any given integer $\alpha$, we can neglect the cumulants higher than the second one at the price of introducing a relative error $O(L^{-d})$. We can then write the moments as follows\footnote{In the second line we changed integration variables into $\tau_j'=\tau_j-(1-\delta_{j \alpha})\tau_{j+1}$. In the third line we used that, for any value of $\tau_\alpha'$ proportional to $L$, the other variables are integrated over a region surrounding $0$ and increasing with $L$; the gaussian integrand makes then it possible to extend the domain of the $\alpha-1$ variables to infinity. }
\begin{multline}\label{eq:malpha}
\mathrm{tr}[\bar {\bs \rho}_{t}^\alpha]\sim\idotsint\limits_{[0,t\sqrt{L}]^\alpha}\frac{\mathrm d^\alpha \tau}{t^\alpha L^{d\frac{\alpha}{2}}}e^{-\K_2\frac{(\tau_\alpha-\tau_1)^2+\sum_{j=1}^{\alpha-1}(\tau_j-\tau_{j+1})^{2}}{2}}=\\
\frac{1}{t^\alpha L^{d\frac{\alpha}{2}}}\int_0^{t L^{\frac{d}{2}}}\!\!\!\!\!\!\mathrm d \tau'_\alpha\int_{-\tau'_\alpha}^{tL^{\frac{d}{2}}-\tau'_\alpha}\!\!\!\!\!\!\!\!\!\mathrm d \tau'_{\alpha-1}\int_{-\tau'_{\alpha-1}-\tau'_{\alpha}}^{t L^{\frac{d}{2}}-\tau'_{\alpha-1}-\tau'_{\alpha}}\!\!\!\!\!\!\!\!\!\!\!\!\mathrm d \tau'_{\alpha-2}\cdots\int_{-\sum_{j=2}^\alpha\tau'_j}^{t L^{\frac{d}{2}}-\sum_{j=2}^\alpha\tau'_j}\!\!\!\!\!\!\mathrm d \tau'_1 e^{-\K_2\frac{(\sum_{j=1}^{\alpha-1}\tau'_j)^2+\sum_{j=1}^{\alpha-1}(\tau'_j)^{2}}{2}}\sim\\
\idotsint\limits_{[-\infty,\infty]^{\alpha-1}} \frac{\mathrm d^{\alpha-1} \tau'}{t^{\alpha-1} L^{d\frac{\alpha-1}{2}}}e^{-\K_2\frac{(\sum_{j=1}^{\alpha-1}\tau'_j)^2+\sum_{j=1}^{\alpha-1}(\tau'_j)^{2}}{2}}=
\alpha^{-\frac{1}{2}}(\frac{\K_2}{2\pi})^{\frac{1-\alpha}{2}}t^{1-\alpha} L^{d \frac{1-\alpha}{2}}\, .
\end{multline}
Thus, the asymptotic behaviour of the R\'enyi entropies $S_\alpha[\bar {\bs \rho}_{t}]=\frac{1}{1-\alpha}\log \mathrm{tr}[\bar {\bs \rho}_{t}^\alpha]$ reads as
\be
S_\alpha[\bar {\bs \rho}_{t}]=\frac{d}{2}\log L+\frac{1}{2}\log\frac{\K_2 t^2}{2\pi}+\frac{\log\alpha}{2(\alpha-1)}+O(L^{-\frac{d}{2}})\, .
\ee
As shown in Appendix~\ref{A:correction}, the leading correction $O(L^{-\frac{d}{2}})$ comes from a more careful integration over the variable that is not modulated by the gaussian.  In Appendix~\ref{a:numchecks}, our estimates for the R\'enyi entropies are checked agains numerics in the transverse field Ising chain.

A straightforward application of the replica trick gives the von Neumann entropy $S_{vN}[\bar {\bs \rho}_{t}]=-\mathrm{tr}[\bar {\bs \rho}_{t}\log \bar {\bs \rho}_{t}]\overset{r.t.}{=}\lim_{\alpha\rightarrow 1}S_\alpha[\bar {\bs \rho}_{t}]$
\be
S_{vN}[\bar {\bs \rho}_{t}]\sim \frac{d}{2}\log L+\frac{1}{2}\log\frac{\K_2 t^2}{2\pi}+\frac{1}{2}\, .
\ee

%%%%%%%%%%%%%%%%%%%
\section{Distribution of eigenvalues}%
%%%%%%%%%%%%%%%%%%%

Since the support of the distribution of eigenvalues is bounded, the moments, and, in turn, the R\'enyi entropies, characterise the distribution completely (Hausdorff moment problem~\cite{moments}). The distribution can then be reconstructed using the approach proposed in Ref.~\cite{entspectrum}. 
To that aim, we define $P_{\bs \rho}(\lambda)$ as the (nonnormalized) distribution of the eigenvalues  $\lambda_j$ of the density matrix $\bs \rho$: $P_{\bs \rho}(\lambda)=\sum_{j}\delta(\lambda-\lambda_j)$. It turns out that $\Phi_{\bs\rho}(\lambda)\equiv\lambda P_{\bs \rho}(\lambda)$ can be written as 
\begin{equation}
\label{eq:fCL}
\Phi_{\bs\rho}(\lambda)=\sum_j\lambda_j\delta(\lambda-\lambda_j)=\lim_{\epsilon\rightarrow 0^+}\mathrm{Im}\, \phi_{\bs \rho}(\lambda-i\epsilon)\, ,\qquad \text{with}\quad \phi_{\bs \rho}(z)=\frac{1}{\pi}\sum_{n=1}^\infty z^{-n}\mathrm{tr}[\bs \rho^n]\, .
\end{equation}
Notwithstanding we computed only the leading order of the asymptotic expansion of the moments in the limit of large $L$, we expect the corrections to be subleading almost everywhere but close to $\lambda=0$. Thus, we can use \eqref{eq:malpha} to reconstruct the asymptotic distribution. Plugging \eqref{eq:malpha} into \eqref{eq:fCL}  gives
\be
\phi_{\bar{\bs \rho}_t}(z)\sim \frac{1}{\pi}\sum_{n=1}^\infty \frac{z^{-n}}{\sqrt{n}}(\frac{\K_2 L^d t^2}{2\pi})^{\frac{1-n}{2}}=\sqrt{\frac{\K_2}{2\pi^3}}L^{\frac{d}{2}} t\, \mathrm{Li}_{\nicefrac{1}{2}}\Bigl(
\sqrt{\frac{2\pi}{\K_2 }}\frac{1}{L^{\frac{d}{2}}t z}\Bigr)\, ,
\ee
where $\mathrm{Li}_{\nicefrac{1}{2}}(x)$ is the polylogarithm of order $\nicefrac{1}{2}$. 
The distribution of eigenvalues is then 
\begin{equation}
\label{eq:distribution}
P_{\bar{\bs\rho}_t}(\lambda)=\lim_{\epsilon\rightarrow 0^+} \sqrt{\frac{\K_2 }{2\pi^3}} \frac{L^{\frac{d}{2}}t}{\lambda}\mathrm {Im}\, \mathrm{Li}_{\frac{1}{2}}\Bigl(
\sqrt{\frac{2\pi}{\K_2 }}\frac{1}{L^{\frac{d}{2}}t \lambda}+i\epsilon\Bigr)=\frac{L^{\frac{d}{2}} t}{\pi\lambda}\sqrt{\frac{\K_2}{\log \frac{2\pi}{\K_2 L^d t^2 \lambda^2}}}\theta_H\Bigl(\sqrt{\frac{2\pi}{\K_2}}\frac{1}{L^{\frac{d}{2}}t}-\lambda\Bigr)\, ,
\end{equation}
where we used that $\lim_{\epsilon\rightarrow 0^+} \mathrm{Li}_\nu(x+i\epsilon)=\pi\nicefrac{(\log x)^{\nu-1}}{\Gamma(\nu)}\theta_H(x-1)$, with $\Gamma(x)$ the gamma function. 
Note that, by definition, $P_{\bar{\bs\rho}_t}(\lambda)$ is normalised to the dimension of the Hilbert space; the asymptotic result \eqref{eq:distribution}, on the other hand, despite capturing all the moments$ \int_0^1\mathrm d\lambda P_{\bar{\bs\rho}_t}(\lambda) \lambda^n $ with integer $n>0$,   
has a divergent integral. This is because the behavior for $\lambda\sim o(L^{-\frac{d}{2}})$ goes beyond the leading order of the asymptotic expansion.  

The distribution $\Phi_{\bar{\bs\rho}_t}(\lambda)$, shown in figure~\ref{f:prob}, is instead correctly normalised
\begin{equation}\label{eq:phi}
\mathrm d\lambda\Phi_{\bar{\bs\rho}_t}(\lambda)=
\mathrm d \Bigl[\sqrt{\frac{\K_2}{2\pi}}L^{\frac{d}{2}}t\lambda\Bigr]\Pi\Bigl(\sqrt{\frac{\K_2}{2\pi}}L^{\frac{d}{2}}t\lambda\Bigr)\, ,\qquad \text{with}\quad\Pi(x)\sim\frac{\theta_H(1-x)}{\sqrt{-\pi\log x}}\, .
\end{equation}
Again, we expect the corrections to this asymptotic result to mainly affect the behaviour of $\Pi(x)$ close to $x=0$.

In view of the universality of \eqref{eq:phi}, we come to the conclusion that the large $L$ behaviour of the distribution of the eigenvalues of a time averaged state is a fundamental feature of quantum spin lattice systems with extensive energy cumulants.

%%%%%%%%%%%%%%%%%%%%%%%%%%
\section{Effective rank of the time averaged state} %
%%%%%%%%%%%%%%%%%%%%%%%%%%

\begin{figure}[t]
%\begin{center}
\hspace{1cm}\includegraphics[width=0.65\textwidth]{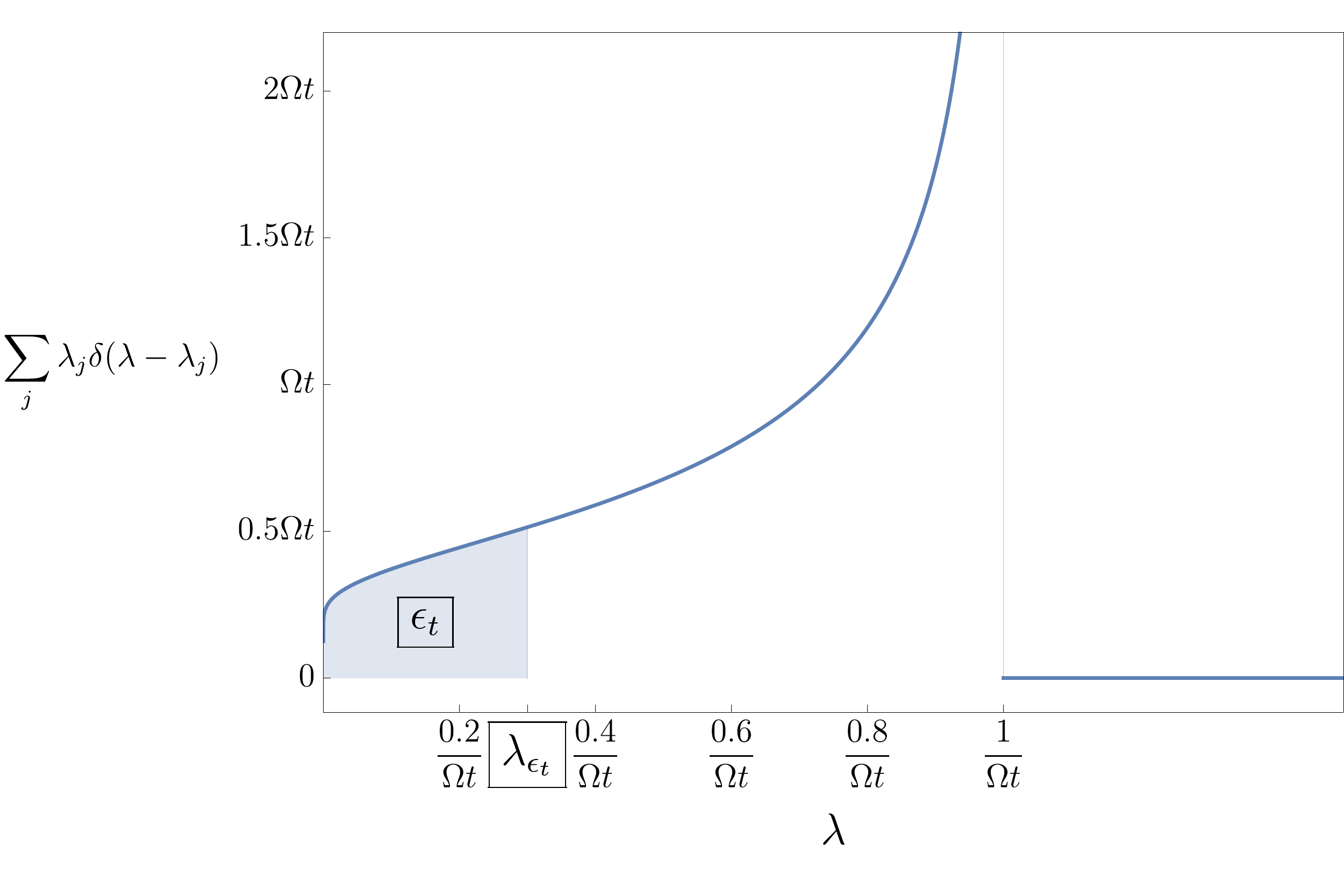}
\caption{The asymptotic probability distribution of the eigenstates of the time averaged state in the limit of large volume; $\Omega=\sqrt{\frac{\mathfrak e_2}{2\pi}}L^{\frac{d}{2}}$. The shaded area has probability $\epsilon_t$.}\label{f:prob}
%\end{center}
\end{figure}

\begin{figure}[!ht]
%\begin{center}
\hspace{1cm}\includegraphics[width=0.65\textwidth]{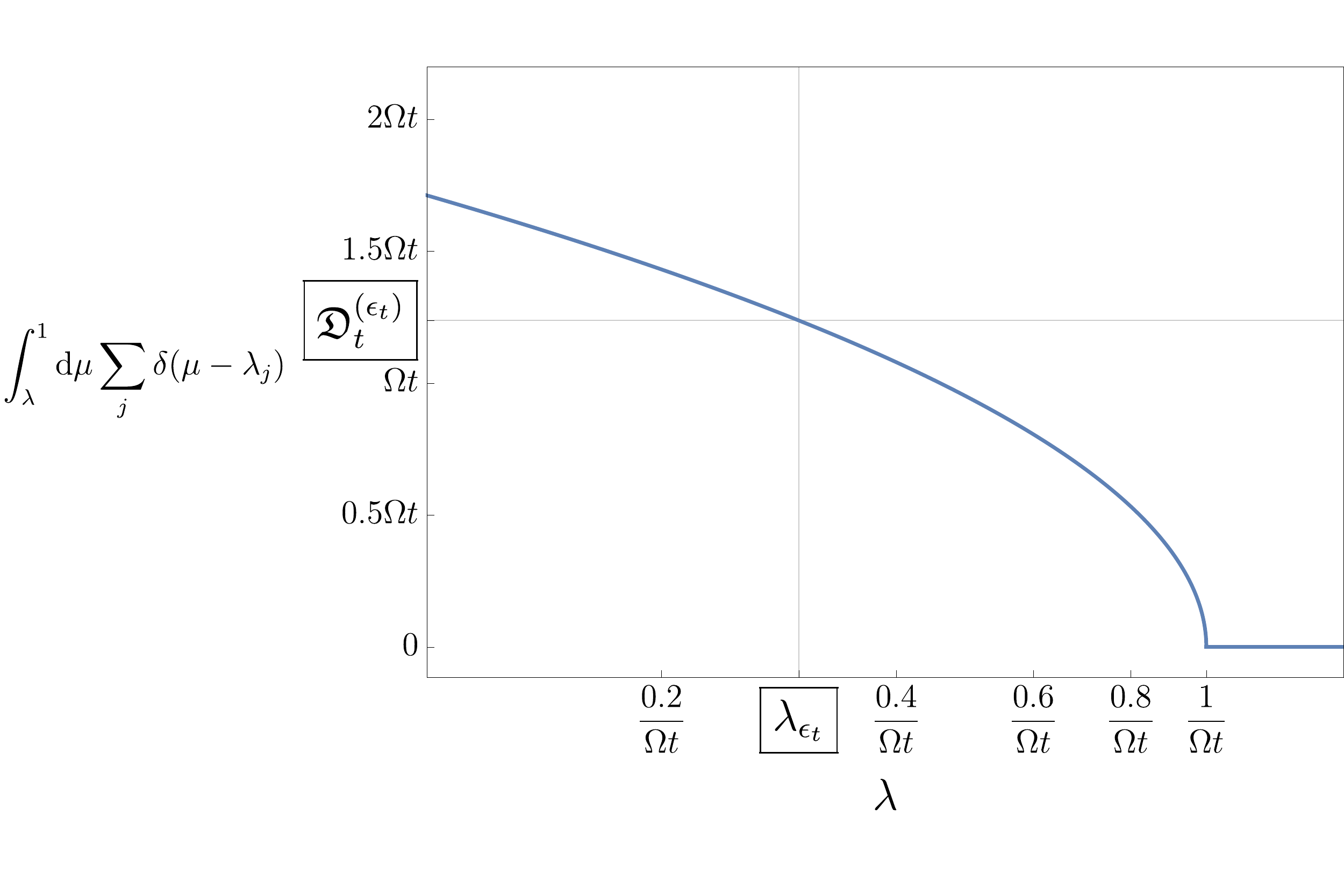}
\caption{The asymptotic number of eigenvalues larger than $\lambda$ for a lattice with a large number of sites; $\Omega=\sqrt{\frac{\mathfrak e_2}{2\pi}}L^{\frac{d}{2}}$. The point $(\lambda_{\epsilon_t},\mathfrak D_t^{(\epsilon_t)})$ identifies a subspace that captures the averaged state with error $\epsilon_t$ (\emph{cf} figure \ref{f:prob}). (Note that the x-axis is shown in logarithmic scale.)}\label{f:num}
%\end{center}
\end{figure}

Having computed the asymptotic behaviour of $\Phi_{\bar{\bs \rho}_t}(\lambda)$  and of $P_{\bar{\bs \rho}_t}(\lambda)$ in the limit of large number of sites, we have access to the asymptotic solution of \eqref{eq:system}, provided that $\lambda_{\epsilon_t}\sim O(L^{-\frac{d}{2}})$. 

From \eqref{eq:phi} it follows
\be\label{eq:pred}
\ba
\epsilon_t&\sim \int_0^{x_{\epsilon_t}}\mathrm d x \frac{1}{\sqrt{-\pi\log x}}=1-\mathrm{erf}(\sqrt{-\log x_{\epsilon_t}}) \\
\mathfrak D_t^{(\epsilon_t)}&\sim \sqrt{\frac{\K_2}{2\pi}}L^{\frac{d}{2}} t \int^1_{x_{\epsilon_t}}\frac{\mathrm d x}{x} \frac{1}{\sqrt{-\pi\log x}}=\frac{\sqrt{2\K_2}}{\pi}\sqrt{-\log x_{\epsilon_t}} L^{\frac{d}{2}} t =\frac{\sqrt{2\K_2}}{\pi}\mathrm{erf}^{-1}(1-\epsilon_t) L^{\frac{d}{2}} t\, ,
\ea
\ee
which are expected to be valid as long as $\epsilon_t$ does not approach zero in the thermodynamic limit. Figures~\ref{f:prob} and \ref{f:num} provide a graphic representation of $\epsilon_t$ and $\mathfrak D_t^{(\epsilon_t)}$. 
Assuming $\epsilon_t$ small, we can expand the inverse error function as follows
\begin{equation}
\label{ }
\mathrm{erf}^{-1}(1-\epsilon_t)\overset{\epsilon_t\ll 1}{\sim}\sqrt{\frac{\log\frac{2}{\pi \epsilon_t^2}-\log\log\frac{2}{\pi \epsilon_t^2}}{2}}
\end{equation}
so we find
\be\label{eq:dimeps}
\mathfrak D_t^{(\epsilon_t)}\approx  \frac{\sqrt{\K_2}}{\pi} \sqrt{-\log\Bigl(-\frac{\pi \epsilon_t^2 }{2}\log\frac{\pi \epsilon_t^2}{2}\Bigr)}\, L^{\frac{d}{2}} t\, .
\ee

The next step is to establish a connection between the cutoff $\epsilon_t$ and the error on the state.  
To that aim, let us introduce a tiny time scale $\delta t$ such that $\nicefrac{t}{\delta t}$ is integer. The time average in $[0,t]$ can be seen as the mean of the sample consisting of the time averaged states over the time windows $[(n-1)\delta t, n\delta t]$, with $n\in [1,\nicefrac{t}{\delta t}]$.
For asymptotically large $L$, one could naively expect the errors produced by truncating the spectrum of $\bar \rho_t$ to be randomly distributed over the sample. Under this assumption, since the variance of independent random variables is additive, the error $\epsilon_t$ of the mean of the sample would scale as $\epsilon_t\sim \epsilon_{\delta t} \sqrt{\nicefrac{\delta t}{t}}$, where $\epsilon_{\delta t}$ is the truncation error in each time slice. This gives
\be\label{eq:dimdt}
\mathfrak D^{(\epsilon_{\delta t})}_{t|\delta t}\sim\frac{\sqrt{2\K_2}}{\pi}\mathrm{erf}^{-1}\Bigl(1-\epsilon_{\delta t}\sqrt{\frac{\delta t}{t}}\Bigr) L^{\frac{d}{2}} t\approx     \frac{\sqrt{\K_2}}{\pi} \sqrt{\log\Bigl(\frac{2 t}{\pi \epsilon_{\delta t}^2 \delta t \log\frac{2 t}{\pi \epsilon_{\delta t}^2\delta t}}\Bigr)}\, L^{\frac{d}{2}} t\, .
\ee
We are interested in observables that have a well-defined expectation value in the thermodynamic limit (for example, local observables), for which the variation of their expectation value in a tiny time window approaches zero as the width of the interval shrinks to zero. 
Thus, if $\delta t$ is small enough, $\epsilon_{\delta t}$ can be identified with the effective error on the time evolving state, and $\mathfrak D^{(\epsilon_{\delta t})}_{t|\delta t}$ is the effective dimension we are looking for.  
This observable-dependent step gives a meaning to ``relevance'': our truncation would be inadequate if we would consider observables that have different expectation values in $\ket{\Psi_t}$ and $\ket{\Psi_{t+\tau_L}}$, with $\lim_{L\rightarrow\infty} \tau_L=0$. 
Note also that we are not allowed to choose a cutoff $\delta t$ approaching zero in the thermodynamic limit because that would require the knowledge of the next orders of the asymptotic expansion, which is trickier (see Appendix \ref{A:correction}). Appendix~\ref{a:numchecks} includes some numerical checks of \eqref{eq:dimdt} in generic spin chains. 

Under the assumption of independence, the projection on a space with size \eqref{eq:dimdt} is arguably the best approximation with error $\epsilon_{\delta t}$ if we have only access to the time averaged state; but this is far from being optimal. It is more convenient to merge the reduced spaces in the time slices $[(n-1)\delta t, n\delta t]$, each of which is the span of $\mathfrak D_{\delta t}^{(\epsilon_{\delta t})}$ states. The size of the resulting space is bounded from above by the total number of elements, which is $\sum_{n=1}^{\nicefrac{t}{\delta t}}\mathfrak D_{\delta t}^{(\epsilon_{\delta t})}=\mathfrak D_t^{(\epsilon_{\delta t})}$.
In the limit $\delta t\rightarrow 0$, we reinterpret $\epsilon_{\delta t}\rightarrow\epsilon$ as the truncation error on the state\footnote{
As in the previous discussion, we are restricting our attention to the system properties compatible with 
$
\ket{\Psi_t}\bra{\Psi_t}\overset{L\rightarrow\infty}{\sim}  \lim_{\delta t\rightarrow 0}\lim_{L\rightarrow\infty}
\bar{\rho}_{t,\delta t}
$, which would have been an identity if the order of limits were reversed.
};
we finally find the upper bound $\mathfrak D_t^{(\epsilon)}$, which is tighter than \eqref{eq:dimdt}. This is not the end of the story. Since the error on the time average can not be larger than the error on the state and $\mathfrak D_t^{(\epsilon)}$ is  also the size of the space capturing the time average with error $\epsilon$, we conclude that the upper bound is asymptotically saturated (which means, in turn, that the  assumption of independence that we made before is not satisfied).
We can now state our main result:
\begin{description}
\item \emph{The size of the space that is approximately spanned by a nonequilibrium state in the time window $[t_0,t_0+t]$ with error $\epsilon$ on the state is asymptotically given by
\be\label{eq:dimspace}
\mathfrak D_t^{(\epsilon)}\sim \frac{\sqrt{2\K_2}}{\pi}\mathrm{erf}^{-1}(1-\epsilon) L^{\frac{d}{2}} t.
\ee
 }
\end{description}

%%%%%%%%%%%%%%%%%%%%%%%%%%%%%%%%%
\subsection{Numerical simulations of nonequilibrium dynamics} %
%%%%%%%%%%%%%%%%%%%%%%%%%%%%%%%%%

The result \eqref{eq:dimspace} is rather suggestive if reconsidered in the context of  simulations of out-of-equilibrium many body quantum systems. %In that respect, $\delta t$ can be identified with a fraction of the time step, $\Delta t$, of the simulation. 
First of all, the relevant space where the dynamics take place is proportional to the square root of the logarithm of the Hilbert space.
In addition, within the assumptions of our calculation, a Lieb-Robinson velocity $v_{\rm LR}$ generally exists~\cite{LRspeed,LRreview}, which bounds the speed at which information propagates throughout the lattice. Consequently, the dynamics of a compact subsystem of size $\ell^d$ in the time window $[0,t]$ display exponentially small finite-size effects, provided that $L\gtrsim \ell+2 v_{\rm LR} t$: we can replace $L$ by $\ell+2 v_{\rm LR} t+R$ in \eqref{eq:dimspace}, making an error that is exponentially small in $R$. In conclusion,  in the limit of large time, the  size of the relevant subspace does not grow faster than $\sim  \sqrt{\K_2}v_{\rm LR}^{\frac{d}{2}} t^{\frac{d}{2}+1}$. 
This provides a physical reference value for the time step $\delta t$ to choose in  numerical simulations: if we identify the ``frame rate'' of the time evolving state simulated with the rate at which the relevant subspace capturing the dynamics of local  observables increases, we obtain $\delta t\sim (v_{\rm LR}t)^{-\frac{d}{2}}/\sqrt{\K_2}$. This formula could be used, for example, to reset the time step of a simulation when a different system is considered.

Finally, we note that an algorithm reducing the dynamics onto the relevant subspace would allow for investigations in much wider time windows than those accessible nowadays through state-of-the-art techniques (which, in spin chains, could be time-dependent density matrix renormalisation group (tDRMG)\cite{tDMRG} and infinite time-evolving block decimation (iTEBD)\cite{iTEBD} algorithms).

%%%%%%%%%%%%%%%%%%
\subsection{Quantum speed limit}%
%%%%%%%%%%%%%%%%%%

Our findings are complementary to the studies on the minimum time $\tau$ required for arriving to an orthogonal state - ``the quantum speed limit''; we mention here just the classical result by Mandelstam and Tamm~\cite{MandelstamTamm} $\tau\geq \nicefrac{\pi L^{-\nicefrac{d}{2}}}{2\sqrt{\K_2}}  $\footnote{For the class of systems that we have considered, the more recent finding by Margolus and Levitin~\cite{MargolusLevitin} is not as tight as the Mandelstam and Tamm bound.}. Notwithstanding the similarity between this limit and our estimate, their meaning is quite different.  In our approach a new orthogonal state starts counting in $\mathfrak D_t^{(\epsilon)}$ after developing a significant overlap, not necessarily equal to $1$, with the time evolving state. In addition, the analogue of the quantum speed limit in the quantum systems considered is the time needed to have a so-called ``dynamical phase transition''~\cite{dynphasetrans}. Generally, the latter time remains nonzero even in the thermodynamic limit, and, in a hypothetical case where it does not, the hypotheses behind the asymptotic expansion that we carried out would not be fulfilled.

\section{Conclusion}
We studied the nonequilibrium time evolution of a state under a Hamiltonian of a general quantum spin lattice system with energy cumulants proportional to the number of the sites. This is a mild condition that is always fulfilled whenever the Hamiltonian is (quasi)local  and the state has finite correlation lengths - see Appendix~\ref{a:extcum}. We have computed the leading order of the asymptotic expansion of the distribution of the eigenvalues of the time averaged state over a fixed time window in the limit of a large number of sites. We used the asymptotic distribution to determine the size of the space visited by the state, and we have found that it is proportional to the square root of the logarithm of the Hilbert space. 
It would be interesting to generalise our results to the time evolution of critical states with super-extensive energy cumulants, like the ones considered in Ref.~\cite{superext}.  
Finally, our estimate does not distinguish chaotic systems from integrable ones (see also Appendix~\ref{a:numchecks}), which are expected to time evolve with a lower complexity~\cite{Prosen_compl} (see also \cite{OpEntAlba} and references therein for more recent investigations); this points to the existence of differences in the properties of the eigenvectors of the time averaged state, which in generic systems are apparently badly approximated by matrix product states even when the initial state has fast decaying correlations and the Hamiltonian is local.

%%%%%%%%%%%%%%%%
\section*{Acknowledgements}%
%%%%%%%%%%%%%%%%

These notes have been prepared for the course ``Quench dynamics and relaxation in isolated integrable quantum spin chains'', held in IPhT Saclay from 25/01/2019 to 22/02/2019. I thank Gr\'egoire Misguich for useful discussions. 

\paragraph{Funding information}
This work was supported by a grant LabEx PALM (ANR-10-LABX-0039-PALM) and by the European Research Council under the Starting Grant No. 805252 LoCoMacro.

%%%%%%%%%%%%%%%%%%%%%%%%%%%
\begin{appendix}
%%%%%%%%%%%%%%%%%%%%%%%%%%%%%

%%%%%%%%%%%%%%%%%%%%%%%%%%%%%%%%%%
\section{Are the energy cumulants extensive?}\label{a:extcum}%
%%%%%%%%%%%%%%%%%%%%%%%%%%%%%%%%%%
In this appendix we discuss the hypothesis that the energy cumulants are extensive. 
We use the following definitions:
\begin{itemize}
\item  $\bs O$ is \emph{localised} if it acts like the identity everywhere but on a compact subsystem. The latter is called ``support'' of $\bs O$.
\item $\bs O$ is \emph{quasilocalised} if it can be approximated by a localised operator and the error made decays exponentially with the extent of the support of the localised operator. 
\item $\bs A$ is \emph{(quasi)local} if $i[\bs A,\bs O]$ is (quasi)localized for every localized operator $\bs O$. 
\end{itemize}

For the sake of simplicity we focus on  spin chains described by quasilocal Hamiltonians $\bs H$, but we do not expect significant differences in higher dimensional lattice systems.
Let $\ket{\Psi_0}$ be the initial state. We define the ``imaginary time evolving state'' as
\be
\ket{\Psi_\beta}=\frac{e^{\frac{\beta}{2} \bs H}}{\sqrt{\braket{\Psi_0|e^{\beta \bs H}|\Psi_0}}}\ket{\Psi_0}\, .
\ee
One can readily show that the energy cumulants can be obtained as follows 
\be\label{eq:kn}
L\K_n=\sum_\ell \K_n(\ell)\qquad \text{with}\quad\K_n(\ell)=\partial_{\beta}^{n-1}\Bigr|_{\beta=0}\braket{\Psi_\beta|\bs h_\ell|\Psi_\beta}\, ,
\ee
where $\bs h_\ell$ is the energy density about a given site $\ell$, defined in such a way that $\bs H=\sum_\ell \bs h_\ell$. %(since the system is translationally invariant, the right hand side of \eqref{eq:kn} is independent of $\ell$). 
The quantities $\K_n(\ell)$ will be referred to energy cumulant densities. %We point out that, if they remain finite in the thermodynamic limit, the energy cumulants are extensive.  
If we can interpret $\ket{\Psi_\beta}$ as the ground state of a quasilocal (Hermitian) Hamiltonian $\bs H_\beta$,  we immediately see that a cumulant  per unit length can diverge only if there is a quantum phase transition at $\beta=0$; in that case, $\ket{\Psi_0}$ is the ground state of a critical system, and it is expected to have power-law decaying correlations. 

In order to be more quantitative, it is convenient to represent the energy cumulant densities as connected correlations in the state $\ket{\Psi_0}$
\be\label{eq:kn1}
\K_n(\ell)=\braket{\bs H^{(n)} \bs h_\ell}-\braket{\bs H^{(n)}}\braket{\bs h_\ell}\, ,
\ee
where $\bs H^{(n)}$ are given by
\be\label{eq:Hdn}
\bs H^{(n)}=\partial_\beta^n\Bigr|_{\beta=0}\frac{e^{\beta\bs H}-1}{\braket{e^{\beta \bs H}}}\, ;
\ee
%and the expectation values are taken in the state $\ket{\Psi_0}$.
we also report a recursive definition 
\be\label{eq:Hnrec}
\bs H^{(n)}=\bs H^n-\sum_{j=1}^{n-1}\binom{n}{j}\braket{\bs H^{n-j}}\bs H^{(j)}\equiv \bs H^{n}- \sum_{j=1}^{n-1}\binom{n}{j}\braket{\bs H^{(n-j)}}\bs H^j\, .
\ee
%and, in particular, 
%\be
%\ba
%\bs H^{(1)}&=\bs H\\
%\bs H^{(2)}&=\bs H^2-2 L \K_1 \bs H\\
%\bs H^{(3)}&=\bs H^3-3 L \K_1 \bs H^2-3(L\K_2-L^2\K_1^2)\bs H\\
%\bs H^{(4)}&=\bs H^4-4 L \K_1 \bs H^3-6(L\K_2-L^2\K_1^2) \bs H^2-4(L\K_3-3 L^2\K_1\K_2-2L^3\K_1^3)\bs H\, .
%\ea
%\ee
\begin{lemma}\label{l:1}
If $\ket{\Psi_0}$ has exponentially decaying correlations,  the connected correlation between $\bs H^{(n)}$ and a generic localised operator $\bs O$ is close to the one between the quasilocalised operator $\bs H^{(n)}_S$ and $\bs O$
\be
\braket{\bs H^{(n)}\bs O}-\braket{\bs H^{(n)}}\braket{\bs O}\sim  \braket{\bs H_S^{(n)}\bs O}-\braket{\bs H_S^{(n)}}\braket{\bs O}\, ,
\ee
where
\be
\bs H_S^{(n)}=\partial_\beta^n\Bigr|_{\beta=0}\frac{e^{\beta\bs H_S}-1}{\braket{\Psi_0|e^{\beta \bs H_S}|\Psi_0}}\, ,
\ee
and $\bs H_S$ is the truncated Hamiltonian
\be
\bs H_S=\sum_{\ell\in S} \bs h_\ell\, .
\ee
Here $S$ is a subsystem that contains the support of $\bs O$; if $S$ is enlarged in such a way that the support of $\bs O$ remains in the bulk of $S$, the error made  decreases exponentially with the extent $|S|$ of $S$. 
\end{lemma}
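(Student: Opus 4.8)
The plan is to pass to the generating function of the connected correlations, identify it with the change of $\braket{\bs O}$ produced by an ``imaginary-time tilting'', and then bound the difference between the full and the truncated version by exponential clustering. First I would resum the defining series: since the numerator of \eqref{eq:Hdn} vanishes at $\beta=0$ one has $\bs H^{(0)}=0$ and $\sum_{n\geq 1}\frac{\beta^n}{n!}\bs H^{(n)}=\frac{e^{\beta\bs H}-1}{\braket{e^{\beta\bs H}}}$, so that
\be\label{eq:Gbeta}
G(\beta)\equiv\sum_{n\geq 1}\frac{\beta^n}{n!}\bigl[\braket{\bs H^{(n)}\bs O}-\braket{\bs H^{(n)}}\braket{\bs O}\bigr]=\frac{\braket{e^{\beta\bs H}\bs O}-\braket{e^{\beta\bs H}}\braket{\bs O}}{\braket{e^{\beta\bs H}}}=\braket{\bs O}_\beta-\braket{\bs O}\, ,
\ee
where all expectations are in $\ket{\Psi_0}$ and $\braket{\cdot}_\beta\equiv\braket{e^{\beta\bs H}\,\cdot}/\braket{e^{\beta\bs H}}$. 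The identical manipulation with $\bs H$ replaced by $\bs H_S$ defines $G_S(\beta)$. Since the lemma is a statement at fixed $n$, it suffices to bound the $n$-th Taylor coefficient of $G(\beta)-G_S(\beta)$.

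The crucial structural point is that the division by $\braket{e^{\beta\bs H}}$ in \eqref{eq:Gbeta} removes the disconnected contributions, so that order by order $\braket{\bs H^{(n)}\bs O}-\braket{\bs H^{(n)}}\braket{\bs O}$ is a fully connected $(n{+}1)$-point cumulant of $\bs O$ with the $n$ energy densities building $\bs H^n=\sum_{\ell_1\dots\ell_n}\bs h_{\ell_1}\cdots\bs h_{\ell_n}$. This is exactly why the bare powers $\bs H^n$ would fail whereas $\bs H^{(n)}$ works: for $n=2$ a configuration in which $\bs h_{\ell'}$ sits far from both $\bs O$ and $\bs h_\ell$ contributes $\braket{\bs h_{\ell'}}\braket{\bs h_\ell\bs O}_c$ to $\braket{\bs H^2\bs O}_c$, but this far-cluster term is precisely cancelled by the subtraction $-2\braket{\bs H}\braket{\bs H\bs O}_c$ implied by $\bs H^{(2)}=\bs H^2-2\braket{\bs H}\bs H$. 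The same cancellation, organised by the recursion \eqref{eq:Hnrec}, survives to all orders, leaving only clusters in which all $n{+}1$ operators are mutually close.

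Having established connectedness, I would bound the truncation error by interpolation. Writing $\bs H=\bs H_S+\bs V$ with $\bs V=\sum_{\ell\notin S}\bs h_\ell$ and $\bs H(s)=\bs H_S+s\bs V$, Duhamel differentiation of $G$ in $s$ produces connected correlations between $\bs O$, supported in the bulk of $S$, and the far energy densities $\bs h_\ell$ with $\ell\notin S$. By the assumed exponential decay of correlations of $\ket{\Psi_0}$ each such term is bounded by $e^{-\mathrm{dist}(\mathrm{supp}\,\bs O,\,\ell)/\xi}$ up to the quasilocal tails of $\bs h_\ell$; summing the resulting geometric series over $\ell\notin S$ and integrating over $s\in[0,1]$ gives, order by order in $\beta$,
\be
\bigl|\braket{\bs H^{(n)}\bs O}-\braket{\bs H^{(n)}}\braket{\bs O}-\bigl(\braket{\bs H_S^{(n)}\bs O}-\braket{\bs H_S^{(n)}}\braket{\bs O}\bigr)\bigr|\leq C_n\, e^{-c\,\mathrm{dist}(\mathrm{supp}\,\bs O,\,\partial S)}\, .
\ee
Because $\mathrm{dist}(\mathrm{supp}\,\bs O,\partial S)=\Theta(|S|)$ when the support of $\bs O$ remains in the bulk of $S$, this is the exponential-in-$|S|$ bound claimed (for $n=1$ it reduces to the two-point clustering used directly).

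The main obstacle is the clustering input needed for $n\geq 2$. Only two-point exponential decay is assumed explicitly, whereas the coefficients $\braket{\bs H^{(n)}\bs O}_c$ are higher connected correlations, whose exponential decay does \emph{not} follow automatically from the two-point one; one must either assume exponential clustering at all orders — the standard situation for states with a finite correlation length, where it follows from a convergent cluster expansion — or supply such an expansion, which would also furnish the constants $C_n$ and $c$. A secondary nuisance is the non-symmetric ordering of $\bs O$ relative to $e^{\beta\bs H}$ in \eqref{eq:Gbeta} and the exponential tails of the quasilocal $\bs h_\ell$ leaking across $\partial S$; both merely renormalise the effective decay length $\xi$ and are harmless at fixed $n$.
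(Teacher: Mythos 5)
Your proposal is correct in substance, but it takes a genuinely different route from the paper's. The paper never passes to a generating function of connected correlations: it works directly with the moments $\braket{\bs H^n\bs O}$, introduces a ``canonical decomposition'' in which every operator is iteratively split into a part $\bs H_\bullet$ whose support contains that of $\bs O$ and a remainder $\bs H_\circ$, and uses the finite correlation length to factor out, one bipartition at a time, everything far from $\bs O$. This culminates in the combinatorial identity \eqref{eq:general}, which is recast in exponential form as the statement that the vector with components $\braket{e^{\beta \bs H_n}\bs O}$ is an eigenvalue-$1$ eigenvector of the matrix \eqref{eq:M}; since that eigenspace is (generically) nondegenerate, Lemma~\ref{l:1} is reduced to Lemma~\ref{l:2}, which in turn is reduced to the purely algebraic operator identity of Lemma~\ref{l:3}. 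Your route instead resums the definition \eqref{eq:Hdn} into $G(\beta)=\braket{\bs O}_\beta-\braket{\bs O}$, recognizes the Taylor coefficients as joint cumulants of $n$ copies of $\bs H$ with one $\bs O$ --- which is the cleanest explanation of why the subtractions in \eqref{eq:Hnrec} are exactly the right ones, a point the paper never makes explicit --- and controls the truncation $\bs H\rightarrow\bs H_S$ by Duhamel interpolation plus clustering. What you gain is brevity and structural transparency, and you bypass Lemma~\ref{l:3} entirely, whose proof in the paper rests on the conjectured identity \eqref{eq:exp1}, verified only up to $n=6$ by computer algebra. What the paper gains is that its only analytic input is iterated \emph{bipartite} factorization, applied one split at a time, so decay of higher-order cumulants never has to be invoked as a primitive.

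Concerning the obstacle you flag for $n\geq 2$: it is not a genuine gap. First, the paper itself uses exactly the strength of hypothesis you need; when it writes, e.g., $\braket{\{\bs H_{\bullet\circ},\bs H_\circ\}\bs O}\sim 3\braket{\{\bs H_{\bullet\circ},\bs H_\circ\}}\braket{\bs H_{\bullet\bullet\bullet}\bs O}$-type factorizations, it is applying clustering to \emph{composite} observables, not to pairs of single local ones. Second, once ``exponentially decaying correlations'' is read, as in \eqref{eq:basics}, as uniform bipartite clustering for arbitrary observables with separated supports, the mixed-cumulant decay you need follows by elementary induction from the moment--cumulant formula, with no cluster expansion: pick the bipartition $(P,P^c)$ of the operators with the widest gap $g$; the partitions whose blocks respect it resum to $\braket{\prod_{i\in P}\cdot}\braket{\prod_{j\in P^c}\cdot}$, which differs from the full moment by $O(e^{-g/\xi})$, while every partition containing a straddling block is $O(e^{-g/\xi})$ by the induction hypothesis applied to that block. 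Since the widest gap is at least $1/n$ times the diameter of the configuration of energy densities, the resulting bounds are summable over their positions at fixed $n$, yielding exactly your bound $C_n\,e^{-c\,\mathrm{dist}(\mathrm{supp}\,\bs O,\,\partial S)}$ (with constants that degrade with $n$, which is harmless here). With that paragraph added, your argument is complete at, arguably, a higher level of rigor than the paper's own proof.
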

By this lemma, the cumulants \eqref{eq:kn1} are expressed as a sum of connected correlations between quasilocalised operators; since such correlations can not diverge, we can state the following
\begin{proposition}
The energy cumulants of a quasilocal Hamiltonian in a state with a finite correlation length are extensive. This result holds true even in the absence of translational invariance and independently of whether the Hamiltonian is critical or not. 
\end{proposition}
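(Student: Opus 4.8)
The plan is to show that the per-site cumulant $\K_n$ stays bounded as $L\to\infty$, which is precisely what extensivity of $L\K_n$ (here I work with the chain, $d=1$, as in the appendix) means. I start from the representation \eqref{eq:kn}--\eqref{eq:kn1}, $L\K_n=\sum_\ell \K_n(\ell)$ with $\K_n(\ell)=\braket{\bs H^{(n)}\bs h_\ell}-\braket{\bs H^{(n)}}\braket{\bs h_\ell}$, and reduce the whole statement to a single uniform estimate: $\sup_{\ell,L}|\K_n(\ell)|<\infty$. Once this is established, summing over the $L$ sites gives $|L\K_n|\le L\,\sup_\ell|\K_n(\ell)|$, i.e.\ $|\K_n|=O(1)$, which is extensivity. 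The non-triviality is that $\bs H^{(n)}$ is a \emph{global} object—through \eqref{eq:Hnrec} it is built from $\bs H^n=(\sum_{\ell'}\bs h_{\ell'})^n$, whose naive norm grows with $L$—so the content of the claim is that its connected correlation with the local density $\bs h_\ell$ does not grow with the volume.

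The second step is to remove this global character using Lemma~\ref{l:1}. For each $\ell$ I apply the lemma with $\bs O=\bs h_\ell$ and $S$ a box of \emph{fixed} extent $|S|=s$ (independent of $\ell$ and $L$) centred on the support of $\bs h_\ell$, obtaining $\K_n(\ell)=\bigl(\braket{\bs H_S^{(n)}\bs h_\ell}-\braket{\bs H_S^{(n)}}\braket{\bs h_\ell}\bigr)+r_s(\ell)$, where the remainder satisfies $|r_s(\ell)|\le C\,e^{-s/\xi}$, $\xi$ being the (finite) correlation length of $\ket{\Psi_0}$; crucially, $C$ and $\xi$ can be taken uniform in $\ell$ and $L$. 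It therefore remains only to bound the truncated connected correlation by a constant.

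This third step is a norm estimate for the truncated operator. Since $\bs H_S=\sum_{\ell'\in S}\bs h_{\ell'}$ obeys $\|\bs H_S\|\le s\,h_{\max}$ with $h_{\max}=\sup_\ell\|\bs h_\ell\|<\infty$, and since the recursion \eqref{eq:Hnrec} expresses $\bs H_S^{(n)}$ as a degree-$n$ polynomial in $\bs H_S$ with scalar coefficients made of the moments $\braket{\bs H_S^k}$ (each bounded by $\|\bs H_S\|^k\le(s\,h_{\max})^k$), the operator norm $\|\bs H_S^{(n)}\|$ is bounded by a polynomial $P_n(s\,h_{\max})$, hence by a constant $M_{n,s}$ for fixed $s$. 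Consequently $|\braket{\bs H_S^{(n)}\bs h_\ell}-\braket{\bs H_S^{(n)}}\braket{\bs h_\ell}|\le 2\|\bs H_S^{(n)}\|\,\|\bs h_\ell\|\le 2M_{n,s}h_{\max}$, uniform in $\ell$ and $L$. Combining with the previous step, $|\K_n(\ell)|\le 2M_{n,s}h_{\max}+C\,e^{-s/\xi}$, uniformly; summing over sites yields $L\K_n=O(L)$. The $O(s)$ sites near a chain end for which the symmetric box does not fit are handled with an asymmetric $S$ (or a cruder bound) and form a subextensive set, so they do not affect the conclusion.

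Finally, this argument uses neither translational invariance nor any assumption on the gap of $\bs H$: the only inputs are a single finite $h_{\max}$ and a single finite correlation length $\xi$ controlling Lemma~\ref{l:1} uniformly over sites, which is exactly the meaning of ``quasilocal Hamiltonian in a state with finite correlation length'' (consistent with the earlier remark that a divergence would force $\beta=0$ to sit at a quantum phase transition and hence force power-law correlations in $\ket{\Psi_0}$). I expect the main obstacle to be entirely absorbed into Lemma~\ref{l:1}—the genuinely delicate point being the fixed-size truncation of the global operator $\bs H^{(n)}$—so that at the level of the proposition the only care required is uniformity in $\ell$ (one $\xi$, one $h_{\max}$, one $C$) and the observation that $\|\bs H_S^{(n)}\|$ does not grow with $L$, which holds precisely because $s$ is held fixed while $L\to\infty$. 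The same reasoning carries over verbatim to $d>1$ with $L$ replaced by $L^d$.
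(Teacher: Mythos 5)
Your proposal is correct and follows essentially the same route as the paper: the paper also derives the Proposition directly from Lemma~\ref{l:1}, which replaces the global operator $\bs H^{(n)}$ in $\K_n(\ell)$ by the truncated $\bs H_S^{(n)}$, and then invokes the fact that connected correlations of quasilocalised operators cannot diverge. Your additional steps (the uniform-in-$\ell,L$ exponential error, the norm bound on $\bs H_S^{(n)}$ via the recursion \eqref{eq:Hnrec} for fixed $|S|=s$, and the treatment of boundary sites) merely make explicit the one-line justification the paper leaves implicit.
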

On the other hand, in the presence of power-law decaying correlations, we expect some energy cumulants to scale differently with the system size.  Ref.~\cite{superext}  provided as example of this anomalous behaviour in the second energy cumulant considering the quantum Ising model.

We refer the reader to Refs~\cite{localT, Anshu} for closely related results; we present here a constructive proof of Lemma~\ref{l:1}.

 Without loss of generality, we can assume $\braket{\bs O}\equiv \braket{\Psi_0|\bs O|\Psi_0}$=0, so  the connected correlation can be identified with the correlation. Let $\bs A=\sum_\ell \bs a_\ell$ be a quasilocal operator.  
We define $\bs A_\bullet$ as a truncation of $\bs A$ with support including the support of $\bs O$ ($\bs A_\bullet=\sum_{\ell\in S} \bs a_\ell$ for some set $S$ containing the support of $\bs O$)
and with $\bs A_\circ$ the rest ($\bs A_\circ=\sum_{\ell\notin S} \bs a_\ell$). We note that generally $\bs A_{\bullet\circ}$ does not commute with $\bs A_{\circ}$, whereas the commutator between $\bs A_{\bullet\bullet}$ and $\bs A_{\circ}$ can be made arbitrarily small by enlarging the subsystems. A finite correlation length implies that the connected correlation between $\bs A$ and a localised operator $\bs O$ is exponentially close to the one between  $\bs A_\bullet$ and $\bs O$
\be\label{eq:basics}
\braket{\bs A \bs O}= \braket{\bs A_\bullet\bs O}+O(e^{-|S|/\xi})\, .
\ee
From now on, every time that a finite set of compact subsystems can be chosen in such a way that an equation is valid up to exponentially small corrections in the extent of a subsystem, we will use the symbol $\sim$. In particular, we have obtained $\braket{\bs A \bs O}\sim \braket{\bs A_\bullet\bs O}$, and hence
\be
\braket{\bs H^{(1)} \bs O}=\braket{\bs H \bs O}\sim \braket{\bs H_\bullet\bs O}=\braket{\bs H^{(1)}_\bullet\bs O}\, ,
\ee
which is the first local identity stated in Lemma~\ref{l:1} ($n=1$). 

More generally, in order to exploit the finiteness of the correlation length in a consistent way, we introduce a canonical decomposition where the expressions are written in such a way that either all the operators appearing in the expectation values have a single $\bullet$, or they do not contain the support of $\bs O$ at all (so they have a final $\circ$). For example we have
\begin{multline}
\braket{\bs A\bs B}=\braket{\bs A_\bullet\bs B_\bullet}+\braket{\bs A_\circ\bs B_\circ}+\braket{\bs A_\bullet\bs B_\circ}+\braket{\bs A_\circ\bs B_\bullet}\sim \braket{\bs A_\bullet\bs B_\bullet}+\braket{\bs A_\circ\bs B_\circ}+\\
\braket{\bs A_{\bullet\bullet}}\braket{\bs B_\circ}+\braket{\bs A_{\bullet\circ}\bs B_\circ}+\braket{\bs A_\circ}\braket{\bs B_{\bullet\bullet}}+\braket{\bs A_\circ\bs B_{\bullet\circ}}=
\braket{\bs A_\bullet\bs B_\bullet}+\braket{\bs A_\circ\bs B_\circ}+\\
\braket{\bs A_{\bullet}}\braket{\bs B_\circ}-\braket{\bs A_{\bullet\circ}}\braket{\bs B_\circ}+\braket{\bs A_{\bullet\circ}\bs B_\circ}+\braket{\bs A_\circ}\braket{\bs B_{\bullet}}-\braket{\bs A_\circ}\braket{\bs B_{\bullet\circ}}+\braket{\bs A_\circ\bs B_{\bullet\circ}}\, .
\end{multline}
For future convenience, we also introduce the notation $\{\bs A_1,\hdots,\bs A_n\}$ to indicate the symmetrised product of the operators $\bs A_j$, \emph{e.g.} 
\be
\{\bs A_1,\bs A_2,\bs A_3\}=\bs A_1\bs A_2\bs A_3+\bs A_1\bs A_3\bs A_2+\bs A_2\bs A_1\bs A_3+\bs A_2\bs A_3\bs A_1+\bs A_3\bs A_1\bs A_2+\bs A_3\bs A_2\bs A_1\, .
\ee
If the same operator appears more than once in the symmetrised product, we write its multiplicity below a horizontal brace, \emph{e.g.},
\be
\{\bs A_1,\underbrace{\bs A_2}_2\}=\{\bs A_1,\bs A_2,\bs A_2\}
\ee

Before proving Lemma~\ref{l:1}, we provide evidence of its validity by working out the cases $n=2,3,4$. This will be useful to understand the subsequent proof. 
The impatient reader can however skip the next sections and continue reading from Section~\eqref{ass:gen}.
\subsection*{Check of $\bs H^{(2)}$}
The canonical decomposition of $\braket{\bs H^2\bs O}$ reads
\begin{multline}\label{eq:Hd2}
\braket{\bs H^2\bs O}\sim  \braket{\bs H_\bullet^2+\{\bs H_\bullet,\bs H_\circ\}\bs O}\sim \braket{\bs H_\bullet^2\bs O}+ \braket{\{\bs H_{\bullet\bullet}+\bs H_{\bullet\circ},\bs H_\circ\}\bs O}\sim \\
\braket{\bs H_\bullet^2\bs O}+ 2\braket{\bs H_\circ}\braket{\bs H_{\bullet\bullet}\bs O}\sim \braket{\bs H_{\bullet}^2\bs O}+ 2\braket{\bs H_\circ}\braket{\bs H_{\bullet}\bs O}\, .
\end{multline}
Since $\bs H_\circ=\bs H-\bs H_\bullet$, we readily obtain the second local identity  
\be
\braket{\bs H^{(2)}\bs O}\sim \braket{\bs H^{(2)}_\bullet \bs O}\, .
\ee
Incidentally, by inverting \eqref{eq:Hd2} after having replaced $\bs H$ by $\bs H_\bullet$, we find
\be
\braket{\bs H_{\bullet\bullet}^2\bs O}\sim \braket{\bs H^2_{\bullet}\bs O}-2\braket{\bs H_{\bullet\circ}}\braket{\bs H_{\bullet}\bs O} \, ,
\ee
which will be useful in the following. 
Analogously, we have
\be\label{eq:H2}
\braket{\bs H^2}\sim \braket{\bs H_\bullet^2}+\braket{\bs H_\circ^2}+
2\braket{\bs H_{\bullet}}\braket{\bs H_\circ}+\braket{\{\bs H_{\bullet\circ},\bs H_\circ\}}-2\braket{\bs H_{\bullet\circ}}\braket{\bs H_\circ}\, .
\ee
\subsection*{Check of $\bs H^{(3)}$}
The canonical decomposition of $\braket{\bs H^3\bs O}$ reads
\begin{multline}\label{eq:Hd3}
\braket{\bs H^3\bs O}\sim  \braket{(\bs H_\bullet^3+\frac{1}{2}\{\bs H_\bullet,\underbrace{\bs H_\circ}_2\}+\frac{1}{2}\{\underbrace{\bs H_\bullet}_2,\bs H_\circ\})\bs O}\sim\\
\braket{\bs H_\bullet^3\bs O}+3\braket{\bs H_\circ^2}\braket{\bs H_{\bullet\bullet}\bs O}+ 3\braket{\bs H_\circ }\braket{\bs H_{\bullet\bullet}^2\bs O}+3\braket{\{\bs H_{\bullet\circ},\bs H_\circ\}}\braket{\bs H_{\bullet\bullet\bullet}\bs O}\sim\\
\braket{\bs H_\bullet^3\bs O}+3(\braket{\bs H_\circ^2}+\braket{\{\bs H_{\bullet\circ},\bs H_\circ\}})\braket{\bs H_{\bullet}\bs O}+3\braket{\bs H_\circ }\braket{(\bs H_{\bullet}-\bs H_{\bullet\circ})^2\bs O}\sim\\
\braket{\bs H_\bullet^3\bs O}+3\braket{\bs H_\circ }\braket{\bs H_{\bullet}^2\bs O}+3(\braket{\bs H_\circ^2}+\braket{\{\bs H_{\bullet\circ},\bs H_\circ\}}-2\braket{\bs H_\circ }\braket{\bs H_{\bullet\circ}})\braket{\bs H_{\bullet}\bs O}\, ,
\end{multline}
which, by virtue of \eqref{eq:H2}, can also be written as
\be
\braket{\bs H^3\bs O}\sim \braket{\bs H_\bullet^3\bs O}+3\braket{\bs H_\circ }\braket{\bs H_{\bullet}^2\bs O}+3(\braket{\bs H^2}-\braket{\bs H_\bullet^2}-
2\braket{\bs H_{\bullet}}\braket{\bs H_\circ})\braket{\bs H_{\bullet}\bs O}\, .
\ee
Incidentally, \eqref{eq:Hd3} also implies
\begin{multline}
\braket{\bs H_{\bullet\bullet}^3\bs O}\sim \braket{\bs H_\bullet^3\bs O} -3\braket{\bs H_{\bullet \circ }}\braket{\bs H_{\bullet}^2\bs O}-\\
3(\braket{\bs H_{\bullet\circ}^2}+\braket{\{\bs H_{\bullet\bullet\circ},\bs H_{\bullet \circ}\}}-2\braket{\bs H_{\bullet \circ} }\braket{\bs H_{\bullet \bullet\circ}}-2\braket{\bs H_{\bullet \circ }}^2)\braket{\bs H_{\bullet}\bs O}\, .
\end{multline}
The third local identity is readily checked
\begin{multline}
\braket{\bs H^{(3)}\bs O}=\braket{\bs H^3\bs O}-3\braket{\bs H}\braket{\bs H^2\bs O}-3(\braket{\bs H^2}-2 \braket{\bs H}^2) \braket{\bs H\bs O}\sim \\
\braket{\bs H_\bullet^3\bs O}+3\braket{\bs H_\circ }\braket{\bs H_{\bullet}^2\bs O}+3(\braket{\bs H_\circ^2}+\braket{\{\bs H_{\bullet\circ},\bs H_\circ\}}-2\braket{\bs H_\circ }\braket{\bs H_{\bullet\circ}})\braket{\bs H_{\bullet}\bs O}-\\
3\braket{\bs H}(\braket{\bs H_{\bullet}^2\bs O}+ 2\braket{\bs H_\circ}\braket{\bs H_{\bullet}\bs O})-3(\braket{\bs H^2}-2 \braket{\bs H}^2)\braket{\bs H_\bullet\bs O}\sim\\
\braket{\bs H_\bullet^3\bs O}-3\braket{\bs H_\bullet}\braket{\bs H_{\bullet}^2\bs O}-3( \braket{\bs H_\bullet^2}-2\braket{\bs H_\bullet}^2)\braket{\bs H_{\bullet}\bs O}=\braket{\bs H_\bullet^{(3)}\bs O}\, .
\end{multline}
Analogously, we obtain
\begin{multline}\label{eq:H3}
\braket{\bs H^3}\sim\braket{\bs H_\bullet^3}+\braket{\bs H_\circ^3}+3\braket{\bs H_\circ}\Bigl(\braket{\bs H_{\bullet}^2}-\braket{\bs H_{\bullet\circ}^2}-2(\braket{\bs H_{\bullet}}-\braket{\bs H_{\bullet\circ}}-\braket{\bs H_{\bullet\bullet\circ}})\braket{\bs H_{\bullet\circ}}-\\
\braket{\{\bs H_{\bullet\bullet\circ},\bs H_{\bullet\circ}\}}\Bigr)+
3\braket{\{\bs H_{\bullet\circ},\bs H_\circ\}}(\braket{\bs H_{\bullet}}-\braket{\bs H_{\bullet\circ}}-\braket{\bs H_{\bullet\bullet\circ}})+\braket{\{\bs H_{\bullet\bullet\circ},\bs H_{\bullet\circ},\bs H_\circ\}}+\\
\frac{1}{2}\braket{\{\underbrace{\bs H_{\bullet\circ}}_2,\bs H_\circ\}}+3\braket{\bs H_\circ^2}(\braket{\bs H_{\bullet}}-\braket{\bs H_{\bullet\circ}})+\frac{1}{2}\braket{\{\bs H_{\bullet\circ},\underbrace{\bs H_\circ}_2\}}\, .
\end{multline}
\subsection*{Check of $\bs H^{(4)}$}
The verification of the fourth local identity is more cumbersome, but it could be useful to dispel doubts upon the validity of Lemma~\ref{l:1}, as the first three cases could lack some potentially dangerous structure. The first step towards the canonical decomposition of $\braket{\bs H^4\bs O}$ reads 
\be
\braket{\bs H^4\bs O}\sim \braket{\bs H_\bullet^4\bs O}+\frac{1}{6}\braket{\{\bs H_\bullet,\underbrace{\bs H_\circ}_3\}\bs O}+\frac{1}{4}\braket{\{\underbrace{\bs H_\bullet}_2,\underbrace{\bs H_\circ}_2\}\bs O}+
 \frac{1}{6}\braket{\{\underbrace{\bs H_\bullet}_3,\bs H_\circ\}\bs O}\, .
\ee
Let us work out term by term:
\begingroup
\allowdisplaybreaks
\begin{align}
\frac{1}{6}\braket{\{\bs H_\bullet,\underbrace{\bs H_\circ}_{3}\}\bs O}\sim& 4\braket{\bs H_\circ^3}\braket{\bs H_{\bullet\bullet}\bs O}\sim 4\braket{\bs H_\circ^3}\braket{\bs H_{\bullet}\bs O}\\
\frac{1}{4}\braket{\{\underbrace{\bs H_\bullet}_2,\underbrace{\bs H_\circ}_2\}\bs O}\sim& 6 \braket{\bs H_\circ^2}\braket{\bs H_{\bullet\bullet}^2\bs O}+\frac{1}{2}\braket{\{\bs H_{\bullet\bullet},\bs H_{\bullet\circ},\underbrace{\bs H_\circ}_2\}\bs O}\sim\nonumber\\
&6 \braket{\bs H_\circ^2}(\braket{\bs H^2_{\bullet}\bs O}-2\braket{\bs H_{\bullet\circ}}\braket{\bs H_{\bullet}\bs O} )+2\braket{\{\bs H_{\bullet\circ},\underbrace{\bs H_\circ}_2\}}\braket{\bs H_{\bullet\bullet\bullet}\bs O}\sim\nonumber\\
&6 \braket{\bs H_\circ^2}\braket{\bs H^2_{\bullet}\bs O} +2(\braket{\{\bs H_{\bullet\circ},\underbrace{\bs H_\circ}_2\}}-6 \braket{\bs H_\circ^2}\braket{\bs H_{\bullet\circ}})\braket{\bs H_{\bullet}\bs O}\displaybreak\\
 \frac{1}{6}\braket{\{\underbrace{\bs H_\bullet}_3,\bs H_\circ\}\bs O}\sim& 4\braket{\bs H_\circ}\braket{\bs H_{\bullet\bullet}^3\bs O}+\frac{1}{2}\braket{\{\underbrace{\bs H_{\bullet\bullet}}_2,\bs H_{\bullet\circ},\bs H_\circ\}\bs O}+\frac{1}{2}\braket{\{\bs H_{\bullet\bullet},\underbrace{\bs H_{\bullet\circ}}_2,\bs H_\circ\}\bs O}\sim\nonumber\\
 &  4\braket{\bs H_\circ}\Bigl( \braket{\bs H_\bullet^3\bs O} -3\braket{\bs H_{\bullet \circ }}\braket{\bs H_{\bullet}^2\bs O}-3(\braket{\bs H_{\bullet\circ}^2}+\braket{\{\bs H_{\bullet\bullet\circ},\bs H_{\bullet \circ}\}}-\nonumber\\
 &2\braket{\bs H_{\bullet \circ} }\braket{\bs H_{\bullet \bullet\circ}}-2\braket{\bs H_{\bullet \circ }}\braket{\bs H_{\bullet\circ}})\braket{\bs H_{\bullet}\bs O}\Bigr)+6\braket{\{\bs H_{\bullet\circ},\bs H_\circ\}}\braket{\bs H_{\bullet\bullet\bullet}^2\bs O}+\nonumber\\
 &4\braket{\{\bs H_{\bullet\bullet\circ},\bs H_{\bullet\circ},\bs H_\circ\}}\braket{\bs H_{\bullet\bullet\bullet\bullet}\bs O}+ 2\braket{\{\underbrace{\bs H_{\bullet\circ}}_2,\bs H_\circ\}}\braket{\bs H_{\bullet\bullet\bullet}\bs O}\sim\nonumber\\
&4\braket{\bs H_\circ}\Bigl( \braket{\bs H_\bullet^3\bs O} -3\braket{\bs H_{\bullet \circ }}\braket{\bs H_{\bullet}^2\bs O}-3(\braket{\bs H_{\bullet\circ}^2}+\braket{\{\bs H_{\bullet\bullet\circ},\bs H_{\bullet \circ}\}}-\nonumber\\
&2\braket{\bs H_{\bullet \circ} }\braket{\bs H_{\bullet \bullet\circ}}-2\braket{\bs H_{\bullet \circ }}\braket{\bs H_{\bullet\circ}})\braket{\bs H_{\bullet}\bs O}\Bigr)+6\braket{\{\bs H_{\bullet\circ},\bs H_\circ\}}\Bigl(\braket{\bs H^2_{\bullet}\bs O}-\nonumber\\
&2\braket{\bs H_{\bullet\circ}}\braket{\bs H_{\bullet}\bs O} -2\braket{\bs H_{\bullet\bullet\circ}}\braket{\bs H_{\bullet}\bs O} \Bigr)+4\braket{\{\bs H_{\bullet\bullet\circ},\bs H_{\bullet\circ},\bs H_\circ\}}\braket{\bs H_{\bullet}\bs O}+\nonumber\\
&2\braket{\{\underbrace{\bs H_{\bullet\circ}}_2,\bs H_\circ\}}\braket{\bs H_{\bullet}\bs O}=4\braket{\bs H_\circ}\braket{\bs H_\bullet^3\bs O}+6\Bigl(\braket{\{\bs H_{\bullet\circ},\bs H_\circ\}}-\nonumber\\
&2\braket{\bs H_\circ}\braket{\bs H_{\bullet\circ}}\Bigr)\braket{\bs H_\bullet^2\bs O}+2\Bigl(2\braket{\{\bs H_{\bullet\bullet\circ},\bs H_{\bullet\circ},\bs H_\circ\}}+ \braket{\{\underbrace{\bs H_{\bullet\circ}}_2,\bs H_\circ\}}-\nonumber\\
&6\braket{\bs H_\circ}(\braket{\bs H_{\bullet\circ}^2}+\braket{\{\bs H_{\bullet\bullet\circ},\bs H_{\bullet \circ}\}}-2\braket{\bs H_{\bullet \circ} }\braket{\bs H_{\bullet \bullet\circ}}-2\braket{\bs H_{\bullet \circ }}^2)-\nonumber\\
&6\braket{\{\bs H_{\bullet\circ},\bs H_\circ\}}(\braket{\bs H_{\bullet\circ}}+\braket{\bs H_{\bullet\bullet\circ}})\Bigr)\braket{\bs H_\bullet\bs O}\, .
\end{align}
\endgroup
%\be
%\frac{1}{4}\braket{\{\bs H_\bullet,\bs H_\bullet,\bs H_\circ,\bs H_\circ\}\bs O}\approx 6 \braket{\bs H_\circ^2}\braket{\bs H^2_{\bullet}\bs O} +2(\braket{\{\bs H_{\bullet\circ},\bs H_\circ,\bs H_\circ\}}-6 \braket{\bs H_\circ^2}\braket{\bs H_{\bullet\circ}})\braket{\bs H_{\bullet}\bs O}
%\ee
%\begin{multline}
%\frac{1}{4}\braket{\{\bs H_\bullet,\bs H_\bullet,\bs H_\circ,\bs H_\circ\}\bs O}\approx 6 \braket{\bs H_\circ^2}\braket{\bs H_{\bullet\bullet}^2\bs O}+\frac{1}{2}\braket{\{\bs H_{\bullet\bullet},\bs H_{\bullet\circ},\bs H_\circ,\bs H_\circ\}\bs O}\approx\\
%6 \braket{\bs H_\circ^2}(\braket{\bs H^2_{\bullet}\bs O}-2\braket{\bs H_{\bullet\circ}}\braket{\bs H_{\bullet}\bs O} )+2\braket{\{\bs H_{\bullet\circ},\bs H_\circ,\bs H_\circ\}}\braket{\bs H_{\bullet\bullet\bullet}\bs O}\approx\\
%6 \braket{\bs H_\circ^2}\braket{\bs H^2_{\bullet}\bs O} +2(\braket{\{\bs H_{\bullet\circ},\bs H_\circ,\bs H_\circ\}}-6 \braket{\bs H_\circ^2}\braket{\bs H_{\bullet\circ}})\braket{\bs H_{\bullet}\bs O}
%\end{multline}
Putting all together we find
\begin{multline}
\braket{\bs H^4\bs O}\sim\braket{\bs H^4_\bullet\bs O}+4\braket{\bs H_\circ}\braket{\bs H_\bullet^3\bs O}+6\Bigl(\braket{\bs H_\circ^2}+\braket{\{\bs H_{\bullet\circ},\bs H_\circ\}}-2\braket{\bs H_\circ}\braket{\bs H_{\bullet\circ}}\Bigr)\braket{\bs H^2_{\bullet}\bs O}+\\
2\Bigl(2\braket{\bs H_\circ^3}+\braket{\{\bs H_{\bullet\circ},\underbrace{\bs H_\circ}_2\}}-6 \braket{\bs H_\circ^2}\braket{\bs H_{\bullet\circ}}+2\braket{\{\bs H_{\bullet\bullet\circ},\bs H_{\bullet\circ},\bs H_\circ\}}+ \braket{\{\underbrace{\bs H_{\bullet\circ}}_2,\bs H_\circ\}}-\\
6\braket{\bs H_\circ}(\braket{\bs H_{\bullet\circ}^2}+\braket{\{\bs H_{\bullet\bullet\circ},\bs H_{\bullet \circ}\}}-
2\braket{\bs H_{\bullet \circ} }\braket{\bs H_{\bullet \bullet\circ}}-2\braket{\bs H_{\bullet \circ }}^2)-\\
6\braket{\{\bs H_{\bullet\circ},\bs H_\circ\}}(\braket{\bs H_{\bullet\circ}}+\braket{\bs H_{\bullet\bullet\circ}})\Bigr)\braket{\bs H_{\bullet}\bs O}\, .
\end{multline}
Using \eqref{eq:H2} and \eqref{eq:H3} we then obtain
%We can then write
%\be
%\braket{\{\bs H_{\bullet\circ},\bs H_\circ\}}\approx \braket{\bs H^2}-\braket{\bs H_\bullet^2}-\braket{\bs H_\circ^2}-
%2\braket{\bs H_{\bullet}}\braket{\bs H_\circ}+2\braket{\bs H_{\bullet\circ}}\braket{\bs H_\circ}
%\ee
%and
%\begin{multline}
%-3\braket{\bs H_\circ}\braket{\{\bs H_{\bullet\bullet\circ},\bs H_{\bullet\circ}\}}-
%3\braket{\{\bs H_{\bullet\circ},\bs H_\circ\}}(\braket{\bs H_{\bullet\circ}}+\braket{\bs H_{\bullet\bullet\circ}})+\braket{\{\bs H_{\bullet\bullet\circ},\bs H_{\bullet\circ},\bs H_\circ\}}+\\
%\frac{1}{2}\braket{\{\bs H_{\bullet\circ},\bs H_{\bullet\circ},\bs H_\circ\}}+\frac{1}{2}\braket{\{\bs H_{\bullet\circ},\bs H_\circ,\bs H_\circ\}}\approx  \braket{\bs H^3}-\braket{\bs H_\bullet^3}-\braket{\bs H_\circ^3}-\\
%3\braket{\bs H_\circ}\Bigl(\braket{\bs H_{\bullet}^2}-\braket{\bs H_{\bullet\circ}^2}-2(\braket{\bs H_{\bullet}}-\braket{\bs H_{\bullet\circ}}-\braket{\bs H_{\bullet\bullet\circ}})\braket{\bs H_{\bullet\circ}}\Bigr)-\\
%3(\braket{\bs H^2}-\braket{\bs H_\bullet^2}-\braket{\bs H_\circ^2}-
%2\braket{\bs H_{\bullet}}\braket{\bs H_\circ}+2\braket{\bs H_{\bullet\circ}}\braket{\bs H_\circ})\braket{\bs H_{\bullet}}-3\braket{\bs H_\circ^2}(\braket{\bs H_{\bullet}}-\braket{\bs H_{\bullet\circ}})
%\end{multline}
\begin{multline}
\braket{\bs H^4\bs O}\approx\braket{\bs H^4_\bullet\bs O}+4\braket{\bs H_\circ}\braket{\bs H_\bullet^3\bs O}+6\Bigl(\braket{\bs H^2}-\braket{\bs H_\bullet^2}-
2\braket{\bs H_{\bullet}}\braket{\bs H_\circ}\Bigr)\braket{\bs H^2_{\bullet}\bs O}+\\
4\Bigl(\braket{\bs H^3}-\braket{\bs H_\bullet^3}-3\braket{\bs H_\circ}\braket{\bs H_{\bullet}^2}-3\braket{\bs H_{\bullet}}\braket{\bs H^2}+3\braket{\bs H_{\bullet}}\braket{\bs H_\bullet^2}+
6\braket{\bs H_{\bullet}}^2\braket{\bs H_\circ}\Bigr)\braket{\bs H_{\bullet}\bs O}\, .
\end{multline}
We are now in a position to check the fourth local identity, which turns out to be satisfied
\begin{multline}
\braket{\bs H^{(4)}\bs O}=\braket{\bs H^4\bs O}-4\braket{\bs H}\braket{\bs H^3\bs O}-6(\braket{\bs H^2}-2 \braket{\bs H}^2) \braket{\bs H^2\bs O}-4(\braket{\bs H^3}-6 \braket{\bs H^2}\braket{\bs H}+\\
6\braket{\bs H}^3) \braket{\bs H\bs O}\approx\braket{\bs H_\bullet^4\bs O}-4\braket{\bs H_\bullet}\braket{\bs H_\bullet^3\bs O}-6(\braket{\bs H_\bullet^2}-2 \braket{\bs H_\bullet}^2) \braket{\bs H_\bullet^2\bs O}-4(\braket{\bs H_\bullet^3}-6 \braket{\bs H_\bullet^2}\braket{\bs H_\bullet}+\\
6\braket{\bs H_\bullet}^3) \braket{\bs H_\bullet\bs O}=\braket{\bs H_\bullet^{(4)}\bs O}\, .
\end{multline}
\subsection{Generic case}\label{ass:gen}
In order to ease the notations, we define
\be
\bs H_j=\begin{cases}
(\bs H_{j-1})_\bullet &j>0\\
\bs H&j=0\, ,
\end{cases} 
\ee
which also implies $(\bs H_{k})_\circ=\bs H_{k-1}-\bs H_k$. 

We claim
\begin{multline}\label{eq:general}
\braket{\bs H^n\bs O}\sim \braket{\bs H_1^n\bs O}+\\
\sum_{\{j\}_k\atop {j_m>0\atop \sum_m j_m<n-1}}\frac{\binom{n}{\sum_m j_m}}{j_1!j_2!\cdots j_k!}\braket{\{\underbrace{\bs H-\bs H_1}_{j_1},\underbrace{\bs H_{1}-\bs H_2}_{j_2},\hdots, \underbrace{\bs H_{k-1}-\bs H_k}_{j_k}\}}\braket{\bs H_{k+1}^{n-\sum_m j_m}\bs O}+\\
\sum_{\{j\}_k\atop {j_m>0\atop \sum_m j_m=n-1}}\frac{n}{j_1!j_2!\cdots j_k!}\braket{\{\underbrace{\bs H-\bs H_1}_{j_1},\underbrace{\bs H_{1}-\bs H_2}_{j_2},\hdots, \underbrace{\bs H_{k-1}-\bs H_k}_{j_k}\}}\braket{\bs H_{1}\bs O}\, .
\end{multline}
One can convince oneself of the validity of this equation by  tracking how a generic term of the expansion is generated. As explicitly done for $n=1,2,3,4$, if we aim at factorising $\braket{\bs H_{k}^m\bs O}$ out of  $\braket{\bs H^n\bs O}$, we must consider terms of the expansion $\braket{(\bs H_\bullet+\bs H_\circ)^n\bs O}$ with at least $m$ operators $\bs H_\bullet$. The procedure is then to expand again and again terms with multiple bullets (\emph{e.g.} $\bs H_\bullet=\bs H_{\bullet\circ}+\bs H_{\bullet\bullet}$), dropping all the terms where the support of all the operators does not include the support of $\bs O$,  until the expression can be factorised in such a way that only $\bs H_{k}^m$ remains attached to $\bs O$. This is possible only if the remaining $n-m$ terms are of the form $\bs H_i-\bs H_{i+1}$, with $i$ running from $0$ to $k-2$. In addition, no term can be missing or the correlator would have been already disconnected, in contrast to the fact that the first factorising term  can not have other than $\bullet$s. The various terms in \eqref{eq:general} can be generated by expanding $\braket{[\bs H_k+(\bs H_{k-1}-\bs H_k)+\hdots+(\bs H-\bs H_1)]^n\bs O}$ 
through the formula
\be
(\sum_{i=1}^k \bs A_i)^n=\sum_{\{j\}\atop\sum_m j_m=n}\frac{1}{j_1!j_2!\cdots}\{\underbrace{\bs A_1}_{j_1},\underbrace{\bs A_2}_{j_2},\dots\}\, ,
\ee
which provides the nonzero coefficients of \eqref{eq:general}. Finally, the last term in \eqref{eq:general} has been isolated to exploit the first local identity $\braket{\bs H_{k+1}\bs O}=\braket{\bs H_{1}\bs O}$.

The correlators between  powers of $\bs H_{k+1}$ with $k>0$ and $\bs O $ on the right hand side of \eqref{eq:general} can be worked out by replacing $\bs H$ with $\bs H_i$ and inverting \eqref{eq:general} as follows
\begin{multline}\label{eq:general1}
\braket{\bs H_{i+1}^n\bs O}\sim \braket{\bs H_i^n\bs O}-\\
\sum_{\{j\}_k\atop {j_m>0\atop \sum_m j_m<n-1}}\frac{\binom{n}{\sum_m j_m}}{j_1!j_2!\cdots j_k!}\braket{\{\underbrace{\bs H_i-\bs H_{i+1}}_{j_1},\underbrace{\bs H_{i+1}-\bs H_{i+2}}_{j_2},\hdots, \underbrace{\bs H_{i+k-1}-\bs H_{i+k}}_{j_k}\}}\braket{\bs H_{i+k+1}^{n-\sum_m j_m}\bs O}-\\
\sum_{\{j\}_k\atop {j_m>0\atop \sum_m j_m=n-1}}\frac{n}{j_1!j_2!\cdots j_k!}\braket{\{\underbrace{\bs H_i-\bs H_{i+1}}_{j_1},\underbrace{\bs H_{i+1}-\bs H_{i+2}}_{j_2},\hdots, \underbrace{\bs H_{i+k-1}-\bs H_{i+k}}_{j_k}\}}\braket{\bs H_{1}\bs O}
\end{multline}
Equations~\eqref{eq:general} and \eqref{eq:general1} allow one to express $\braket{\bs H^n\bs O}$ as a linear combination of $\braket{\bs H_1^m\bs O}$, with $m\leq n$. 

The next step is to recover the operators $\bs H^{(n)}$. This can be done using \eqref{eq:Hdn}. To that aim, it is convenient to formally rewrite \eqref{eq:general1} in exponential form 
\begin{multline}\label{eq:idexp}
\braket{e^{\beta\bs H_i}\bs O}-\braket{e^{\beta\bs H_{i+1}}\bs O}\sim\\
\sum_{n=1}^\infty\beta^n\sum_{\{j\}_k\atop {j_m>0\atop \sum_m j_m<n}}\frac{\braket{\{\underbrace{\bs H_i-\bs H_{i+1}}_{j_1},\underbrace{\bs H_{i+1}-\bs H_{i+2}}_{j_2},\hdots, \underbrace{\bs H_{i+k-1}-\bs H_{i+k}}_{j_k}\}}}{j_1!j_2!\cdots j_k!(n-\sum_m j_m)!(\sum_mj_m)!}\braket{\bs H_{i+k+1}^{n-\sum_m j_m}\bs O}=\\
\sum_{k=1}^\infty \sum_{\{j\}_k\atop {j_m>0}}\beta^{\sum_m j_m}\frac{\braket{\{\underbrace{\bs H_i-\bs H_{i+1}}_{j_1},\underbrace{\bs H_{i+1}-\bs H_{i+2}}_{j_2},\hdots, \underbrace{\bs H_{i+k-1}-\bs H_{i+k}}_{j_k}\}}}{j_1!j_2!\cdots j_k!(\sum_m j_m)!}\braket{e^{\beta \bs H_{i+k+1}}\bs O}\, .
\end{multline}
%below $\ell>1$
%\begin{multline}
%\braket{e^{\beta\bs H_{\ell}}\bs O}\sim\\
%\braket{e^{\beta\bs H_{\ell-1}}\bs O}-\sum_{k=1}^\infty \sum_{\{j\}_k\atop {j_m>0}}\beta^{\sum_m j_m}\frac{\braket{\{\underbrace{\bs H_{\ell-1}-\bs H_{\ell}}_{j_1},\underbrace{\bs H_{\ell}-\bs H_{\ell+1}}_{j_2},\hdots, \underbrace{\bs H_{\ell+k-2}-\bs H_{\ell-1+k}}_{j_k}\}}}{j_1!j_2!\cdots j_k!(\sum_m j_m)!}\braket{e^{\beta \bs H_{\ell+k}}\bs O}
%\end{multline}
We note that this expression is only formally correct, indeed the subsystems that allow for the canonical decomposition depend on the specific $n$ in \eqref{eq:general} (in general, the larger $n$ and the larger the subsystems are). 
We can resolve this subtlety by replacing $\bs H_m$, with  $m\geq N$, by $\bs H_{N-1}$. This choice regularises \eqref{eq:idexp} without affecting $\bs H^{(m)}$ with $m\leq N$.

Assuming this regularisation, \eqref{eq:idexp} can be read as an eigenvalue equation: the truncated vector with coordinates $\braket{e^{\beta\bs H_{n-1}}\bs O}$ ($n=1,\hdots,N$) is an eigenvector with eigenvalue $1$ of the $N$-by-$N$ matrix 
\be\label{eq:M}
M^{(N)}_{\ell n}(\beta)=\begin{cases}
\delta_{n2}+\sum_{\{j\}_{n-2}\atop {j_m>0}}\beta^{\sum_m j_m}\frac{\braket{\{\underbrace{\bs H-\bs H_{1}}_{j_1},\underbrace{\bs H_{1}-\bs H_{2}}_{j_2},\hdots, \underbrace{\bs H_{n-3}-\bs H_{n-2}}_{j_{n-2}}\}}}{j_1!j_2!\cdots j_{n-2}!(\sum_m j_m)!}&\ell=1\\
\delta_{\ell,n+1}-\sum_{\{j\}_{n-\ell}\atop {j_m>0}}\beta^{\sum_m j_m}\frac{\braket{\{\underbrace{\bs H_{\ell-2}-\bs H_{\ell-1}}_{j_1},\underbrace{\bs H_{\ell-1}-\bs H_{\ell}}_{j_2},\hdots, \underbrace{\bs H_{n-3}-\bs H_{n-2}}_{j_{n-\ell}}\}}}{j_1!j_2!\cdots j_{n-\ell}!(\sum_m j_m)!}& \ell>1\, .
\end{cases}
\ee

The rest of the section is organised in a lemma-proof structure that will allow us to complete the proof of Lemma~\ref{l:1}.
\begin{lemma}\label{l:2}
In a state with a finite correlations length the following equivalence is satisfied:
\begin{multline}\label{eq:energylemma}
\braket{\bs H_i^n}\approx \braket{\bs H_{i+1}^n}+\\
\sum_{\{j\}_k\atop {j_m>0\atop \sum_m j_m\leq n}}\frac{\binom{n}{\sum_m j_m}}{j_1!j_2!\cdots j_k!}\braket{\{\underbrace{\bs H_i-\bs H_{i+1}}_{j_1},\hdots, \underbrace{\bs H_{i+k-1}-\bs H_{i+k}}_{j_k}\}}\braket{\bs H_{i+k+1}^{n-\sum_m j_m}}\, .
\end{multline}
\end{lemma}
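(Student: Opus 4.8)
The plan is to derive \eqref{eq:energylemma} by the same recursive canonical decomposition that produced \eqref{eq:general}, the essential difference being the absence of the anchoring operator $\bs O$: here the innermost truncation $\bs H_{i+k+1}$ takes over the role of $\bs O$, and since it is allowed to be trivial the expansion closes into the single sum $\sum_m j_m\le n$. In particular the term $\sum_m j_m=n$ (empty core, $\braket{\bs H_{i+k+1}^0}=1$) produces a pure shell cluster $\braket{\{\cdots\}}$, while the term $\sum_m j_m=n-1$, which in \eqref{eq:general1} was singled out and rewritten through the first local identity $\braket{\bs H_{i+k+1}\bs O}=\braket{\bs H_1\bs O}$, here simply retains its core $\braket{\bs H_{i+k+1}}$ and merges into the same sum with coefficient $\binom{n}{n-1}=n$.

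The first step is to peel off the outermost shell by writing $\bs H_i=\bs H_{i+1}+(\bs H_i-\bs H_{i+1})$ and using the operator identity
\be
\braket{\bs H_i^n}=\sum_{p=0}^n\frac{1}{p!\,(n-p)!}\braket{\{\underbrace{\bs H_i-\bs H_{i+1}}_{p},\underbrace{\bs H_{i+1}}_{n-p}\}}\, ,
\ee
whose $p=0$ contribution is exactly $\braket{\bs H_{i+1}^n}$. The shell $\bs H_i-\bs H_{i+1}$ occupies the outer annulus $S_i\setminus S_{i+1}$ and is \emph{adjacent} to the core $\bs H_{i+1}$, so it cannot be factored off at this stage; this adjacency is the crux of the argument. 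The remedy is to iterate: every retained $\bs H_{i+1}$ is again split as $\bs H_{i+2}+(\bs H_{i+1}-\bs H_{i+2})$, the new shell remaining correlated with the previous one because their annuli are contiguous, and one continues until, in a given term, the core $\bs H_{i+k+1}$ is separated from the connected shell cluster filling $S_i\setminus S_{i+k}$ by the single empty annulus $S_{i+k}\setminus S_{i+k+1}$. At that point the finite correlation length permits the cluster decomposition
\be
\braket{\{\cdots\}\,\bs H_{i+k+1}^{\,m}}\approx\braket{\{\cdots\}}\,\braket{\bs H_{i+k+1}^{\,m}}\, ,
\ee
with an error exponentially small in the width of that annulus, which is precisely the meaning of $\approx$.

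It then remains to collect the combinatorial weights generated along the recursion. Distributing the $n$ factors so that $j_m$ of them fall into the $m$-th annulus ($m=1,\dots,k$) and $n-\sum_m j_m$ stay in the core reproduces the multinomial factor $\binom{n}{\sum_m j_m}/(j_1!\cdots j_k!)$ of \eqref{eq:energylemma}, the factorials $j_1!\cdots j_k!$ cancelling exactly the over-counting built into the symmetrised products $\{\cdots\}$. A cleaner way to organise this bookkeeping is to run the whole computation in exponential form and prove the $\bs O$-free counterpart of \eqref{eq:idexp} for $\braket{e^{\beta\bs H_i}}$, from which \eqref{eq:energylemma} follows by reading off the coefficient of $\beta^n$; the explicit results \eqref{eq:H2} and \eqref{eq:H3} then serve as base cases $n=2,3$.

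The main obstacle I anticipate is not the cluster-decomposition step, which is an immediate consequence of the finite correlation length, but the verification that the recursion terminates with exactly the stated coefficients and leaves no surviving cross terms. One has to check that at every level the shell operators genuinely constitute a single connected region $S_i\setminus S_{i+k}$ while the core stays isolated by an empty annulus, so that each intermediate use of cluster decomposition is legitimate; keeping the successive annulus widths growing with the nesting depth — of which there are at most $n$ — then yields a uniform exponential bound on the total accumulated error and justifies the symbol $\approx$ throughout.
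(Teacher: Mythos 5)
Your proof is correct in substance, but it runs the argument in the opposite direction from the paper and replaces its key algebraic ingredient. The paper's proof is two lines: starting from \eqref{eq:energylemma}, it uses the finite correlation length to \emph{merge} each factorized term $\braket{\{\cdots\}}\braket{\bs H_{i+k+1}^{n-\sum_m j_m}}$ back into a single symmetrised-product expectation value with weight $1/(j_1!\cdots j_k!\,(n-\sum_m j_m)!)$, and then observes that the merged equation is an exact \emph{operator} identity --- Lemma~\ref{l:3} --- valid for arbitrary non-commuting operators and any state, so no recursion or term tracking is needed at all. You instead build the expansion forward by recursive multinomial peeling (split $\bs H_{i+k}=(\bs H_{i+k}-\bs H_{i+k+1})+\bs H_{i+k+1}$, a term being terminal exactly when the newly produced shell has multiplicity zero, which is what generates the skipped annulus separating shells from core) and only then factorize by clustering. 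The two routes are logically equivalent: the bookkeeping you flag as your main obstacle --- that the recursion closes with the stated coefficients and that no ``gapped'' shell configurations survive --- is precisely the content of Lemma~\ref{l:3}, which the paper establishes once and for all via the exponential/telescoping cancellation argument (supplemented by a symbolic check up to $n=6$). Your forward recursion does close: multilinearity of the symmetrised product propagates the weights $1/(j_1!\cdots j_k!\,(n-\sum_m j_m)!)$ through each splitting, and the factor $\binom{n}{\sum_m j_m}$ of \eqref{eq:energylemma} arises when you commute the $n-\sum_m j_m$ core factors out of the interleaved symmetrised product before clustering (legitimate up to exponentially small corrections because the core's support is separated from the shells by the empty annulus). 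What each approach buys: the paper's direction outsources all combinatorics to a state-independent identity and invokes clustering only once, in the merging step; your direction is more constructive and, completed carefully, would amount to an inductive proof of Lemma~\ref{l:3} itself --- arguably a gain, since the paper's own proof of that lemma contains an explicitly conjectural step --- at the price of redoing combinatorics that the paper disposes of by quotation.
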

\begin{proof}[Proof of Lemma~\ref{l:1}]
We note that the eigenspace corresponding to the eigenvalue $1$ of the matrix $M^{(N)}(\beta)$ in \eqref{eq:M} is generically nondegenerate, indeed \eqref{eq:general} allows one to express $\braket{\bs H^n\bs O}$ in terms of $\braket{\bs H^m_1\bs O}$ without ambiguities. 
In exponential form (\emph{cf}. \eqref{eq:Hdn}), Lemma~\ref{l:1} states %$\frac{\braket{e^{\beta \bs H}\bs O}}{\braket{e^{\beta \bs H}}}\sim \frac{\braket{e^{\beta \bs H_1}\bs O}}{\braket{e^{\beta \bs H_1}}}$, that is to say,  
\be
\braket{e^{\beta\bs H_n}\bs O}\sim \frac{\braket{e^{\beta\bs H_1}\bs O} }{\braket{e^{\beta\bs H_1}}}\braket{e^{\beta\bs H_n}}\, .
\ee
Thus, Lemma~\ref{l:1} implies that the vector with coordinates  $\braket{e^{\beta\bs H_n}}$ is an eigenvector of $M^{(N)}(\beta)$ with eigenvalue $1$. Being the corresponding eigenspace nondegenerate, the implication holds true also in the opposite direction. 
By exponentiating \eqref{eq:energylemma} we find
\be
\braket{e^{\beta\bs H_i}}\sim\braket{e^{\beta\bs H_{i+1}}}+\sum_{\{j\}_k\atop {j_m>0}}\beta^{\sum_m j_m}\frac{\braket{\{\underbrace{\bs H_i-\bs H_{i+1}}_{j_1},\hdots, \underbrace{\bs H_{i+k-1}-\bs H_{i+k}}_{j_k}\}}}{j_1!j_2!\cdots j_k!(\sum_m j_m)!}\braket{e^{\beta \bs H_{i+k+1}}}\, ,
\ee
therefore Lemma~\ref{l:1} is a direct consequence of Lemma~\ref{l:2}.

\end{proof}
\begin{lemma}\label{l:3}
For any set of operators $\bs A_j$, we have
\be\label{eq:An0}
\bs A_1^n-\bs A_2^n= \sum_{\{j\}_k\atop {j_m>0\atop \sum_m j_m\leq n}}\frac{\{\underbrace{\bs A_1-\bs A_{2}}_{j_1},\hdots, \underbrace{\bs A_{k}-\bs A_{k+1}}_{j_k},\underbrace{\bs A_{k+2}}_{n-\sum_m j_m}\}}{j_1!j_2!\cdots j_k!(n-\sum_m j_m)!}\, .
\ee
\end{lemma}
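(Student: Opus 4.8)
The plan is to display the right-hand side as a telescoping sum over $k$. Writing $\bs\delta_m:=\bs A_m-\bs A_{m+1}$ and denoting by $\bs V_k$ the collection of all terms in the Lemma with a given number $k$ of distinct differences (so that the claim reads $\bs A_1^n-\bs A_2^n=\sum_{k\ge1}\bs V_k$), I observe that the summand indexed by $(k;j_1,\dots,j_k)$ is exactly the contribution to the multinomial expansion of $(\bs\delta_1+\dots+\bs\delta_k+\bs A_{k+2})^n$ in which each of $\bs\delta_1,\dots,\bs\delta_k$ occurs at least once. The structural feature that must be exploited is the ``gap'' in the statement, namely that the base operator is $\bs A_{k+2}$ and not $\bs A_{k+1}$: this is precisely what forces the spurious dependence on $\bs A_3,\bs A_4,\dots$ to cancel, and indeed one checks by hand that the replacement $\bs A_{k+2}\to\bs A_{k+1}$ already fails at $n=2$.

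First I would introduce the ``gapless'' companion quantities
\be
\tilde{\bs V}_k=\sum_{\substack{j_1,\dots,j_k>0\\ J\le n}}\frac{\{\underbrace{\bs\delta_1}_{j_1},\dots,\underbrace{\bs\delta_k}_{j_k},\underbrace{\bs A_{k+1}}_{n-J}\}}{j_1!\cdots j_k!\,(n-J)!}\,,\qquad J=\textstyle\sum_m j_m\,,
\ee
which differ from $\bs V_k$ only in that the base $\bs A_{k+2}$ is replaced by $\bs A_{k+1}$. The key per-term identity I would establish is $\bs V_k=\tilde{\bs V}_k-\tilde{\bs V}_{k+1}$. Its proof rests on the elementary multilinearity of the symmetrised product: for $\bs X=\bs C+\bs D$ one has $\{\dots,\underbrace{\bs X}_{p}\}=\sum_{a=0}^{p}\binom{p}{a}\{\dots,\underbrace{\bs D}_{a},\underbrace{\bs C}_{p-a}\}$, since expanding each of the $p$ copies of $\bs X$ and grouping by the number $a$ of slots that pick $\bs D$ produces $\binom{p}{a}$ identical symmetrised products. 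Applying this with $\bs X=\bs A_{k+1}=\bs A_{k+2}+\bs\delta_{k+1}$ and $p=n-J$, the $a=0$ term reproduces the base $\bs A_{k+2}$ of $\bs V_k$ and cancels it in $\tilde{\bs V}_k-\bs V_k$, while the $a\ge1$ terms, after relabelling $a\mapsto j_{k+1}$, reassemble exactly into $\tilde{\bs V}_{k+1}$ (whose base is $\bs A_{(k+1)+1}=\bs A_{k+2}$ and in which all of $\bs\delta_1,\dots,\bs\delta_{k+1}$ are present).

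Given this identity the sum telescopes. Because $\tilde{\bs V}_{n+1}=0$ (it would require $n+1$ differences each present, hence $J\ge n+1>n$), and because only $k\le n$ contribute for fixed $n$, I obtain $\sum_{k\ge1}\bs V_k=\sum_{k=1}^{n}(\tilde{\bs V}_k-\tilde{\bs V}_{k+1})=\tilde{\bs V}_1$. Finally I would evaluate $\tilde{\bs V}_1$ using the multinomial theorem for symmetrised products quoted earlier in this appendix: relaxing the constraint $j_1>0$ to $j_1\ge0$ turns $\tilde{\bs V}_1$ into $(\bs\delta_1+\bs A_2)^n=\bs A_1^n$, while the single omitted $j_1=0$ term equals $\bs A_2^n$; hence $\tilde{\bs V}_1=\bs A_1^n-\bs A_2^n$, which is the assertion.

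The main obstacle is conceptual rather than computational: recognising that the gapped summand should be written as a difference $\tilde{\bs V}_k-\tilde{\bs V}_{k+1}$ of gapless objects, and seeing that the single substitution $\bs A_{k+1}=\bs A_{k+2}+\bs\delta_{k+1}$ inside one group of symmetrised slots is exactly what regenerates the next difference $\bs\delta_{k+1}$ and drives the telescoping. Once the auxiliary $\tilde{\bs V}_k$ are in place, each remaining step is a one-line application of multilinearity or of the multinomial identity; no estimates or convergence arguments are needed, since this is a purely algebraic statement and the sums are finite for every fixed $n$.
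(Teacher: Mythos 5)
Your proof is correct, and it takes a genuinely different route from the paper's --- one that is in fact more conclusive. The paper proves Lemma~\ref{l:3} by exponentiating \eqref{eq:An0} into a generating identity for $e^{x\bs A_1}-e^{x\bs A_2}$ and then \emph{conjecturing} that this identity can be resummed as a signed sum of exponentials, $\sum_{k}\sum_{s_m=\pm1}\bigl(\prod_{j}s_j\bigr)\exp\bigl(x\bs A_{k+2}+\sum_{j}\tfrac{1+s_j}{2}x(\bs A_j-\bs A_{j+1})\bigr)$, which is then collapsed to the left-hand side by a pairwise cancellation over sign sequences; the resummation itself is immediate only for commuting operators, and the non-commuting case is supported by symbolic checks up to $n=6$ rather than by a complete argument. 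Your telescoping proof removes exactly this conjectural step: the companion sums $\tilde{\bs V}_k$ with base $\bs A_{k+1}$, the identity $\bs V_k=\tilde{\bs V}_k-\tilde{\bs V}_{k+1}$ (a direct consequence of multilinearity and permutation symmetry of the symmetrised product applied to $\bs A_{k+1}=\bs A_{k+2}+\bs\delta_{k+1}$), the vanishing of $\tilde{\bs V}_{n+1}$, and the evaluation $\tilde{\bs V}_1=(\bs\delta_1+\bs A_2)^n-\bs A_2^n=\bs A_1^n-\bs A_2^n$ via the symmetrised multinomial formula (which the paper itself quotes) are all finite algebraic steps valid for arbitrary non-commuting $\bs A_j$. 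What the paper's route buys is the compact generating-function form, which connects naturally to the exponential identities \eqref{eq:idexp} and the matrix $M^{(N)}$ used elsewhere in the appendix; what yours buys is an unconditional proof at the lemma's stated generality, so it could replace the published argument outright. Your side remark is also correct: with base $\bs A_{k+1}$ instead of $\bs A_{k+2}$ the identity already fails at $n=2$, where the spurious term $\{\bs\delta_1,\bs\delta_2\}$ survives --- this pinpoints why the index gap is essential and why your per-term identity is the right decomposition.
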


\begin{proof}[Proof of Lemma~\ref{l:2}]
Since the correlation length is finite, we can merge back the expectation values in \eqref{eq:energylemma}
\be
\braket{\bs H_i^n}\sim \braket{\bs H_{i+1}^n}+\sum_{\{j\}_k\atop {j_m>0\atop \sum_m j_m\leq n}}\frac{\braket{\{\underbrace{\bs H_i-\bs H_{i+1}}_{j_1},\hdots, \underbrace{\bs H_{i+k-1}-\bs H_{i+k}}_{j_k},\underbrace{\bs H_{i+k+1}}_{n-\sum_m j_m}\}}}{j_1!j_2!\cdots j_k!(n-\sum_m j_m)!}\, .
\ee
By Lemma~\ref{l:3}, this is in fact an identity, valid independently of the operators $\bs H_i$.  
\end{proof}

\begin{proof}[Proof of Lemma~\ref{l:3}]
%First, we prove \eqref{eq:An0} under the assumption that the operators commute; we will then argue that \eqref{eq:An0} holds true also in the non-abelian case. If $[\bs A_\ell,\bs A_n]=0$, we have
%\be\label{eq:An}
%\bs A_1^n-\bs A_2^n= \sum_{\{j\}_k\atop {j_m>0\atop \sum_m j_m\leq n}}n!\frac{(\bs A_1-\bs A_{2})^{j_1}\cdots (\bs A_{k}-\bs A_{k+1})^{j_k}\bs A_{k+2}^{n-\sum_m j_m}}{j_1!j_2!\cdots j_k!(n-\sum_m j_m)!}\, .
%\ee
%In exponential form, this reads as
%\be\label{eq:exp0}
%e^{x \bs A_1}-e^{x \bs A_2}=\sum_{k=1}^\infty e^{x \bs A_{k+2}}\prod_{j=1}^k\Bigl(e^{x(\bs A_j-\bs A_{j+1})}-1\Bigr)\, ,
%\ee
%where the regularisation $\bs A_m=\bs A_{N-1}$ for $m\geq N$ is understood. 
%Let us call $a_j$ a generic eigenvalue of $e^{x \bs A_j}$; we have
%\be
%a_1-a_2=\sum_{k=1}^\infty a_{k+2}\prod_{j=1}^k\Bigl(\frac{a_j}{a_{j+1}}-1\Bigr)\, .
%\ee
%If $a_1= a_2$, the identity is satisfied, therefore we can assume $a_1\neq a_2$ and divide by $a_1-a_2$. This gives
%\be
%a_2-a_3=\sum_{k=1}^\infty a_{k+3}\prod_{j=2}^{k+1}\Bigl(\frac{a_j}{a_{j+1}}-1\Bigr)
%\ee
%and so on and so forth, that is to say
%\be\label{eq:rec}
%a_{m+1}-a_{m+2}=\sum_{k=1}^\infty a_{k+m+2}\prod_{j=m+1}^{k+m}\Bigl(\frac{a_j}{a_{j+1}}-1\Bigr)\Rightarrow a_{m}-a_{m+1}=\sum_{k=1}^\infty a_{k+m+1}\prod_{j=m}^{k+m-1}\Bigl(\frac{a_j}{a_{j+1}}-1\Bigr)\, .
%\ee
%Since $\bs A_N=\bs A_{N-1}$, applying \eqref{eq:rec} repeated times leads to \eqref{eq:An}. 
%This completes the proof when $[\bs A_\ell,\bs A_n]=0$. 

%Let us now consider the non-abelian case.  
When we exponentiate \eqref{eq:An0} we end up with 
\be\label{eq:An1}
e^{x \bs A_1}-e^{x \bs A_2}= \sum_{k=1}^\infty\sum_{\{j\}_k\atop {j_m>0}}\sum_{n=0}\frac{\{\underbrace{x\bs A_1-x\bs A_{2}}_{j_1},\hdots, \underbrace{x\bs A_{k}-x\bs A_{k+1}}_{j_k},\underbrace{x\bs A_{k+2}}_{n}\}}{n!j_1!j_2!\cdots j_k!(n+\sum_m j_m)!}\, ,
\ee
where the regularisation $\bs A_m=\bs A_{N-1}$ for $m\geq N$ is understood. 
Led by the symmetry of the products, we conjecture that  \eqref{eq:An1} can be written as\footnote{This is clearly true if the matrices commute with one another.}
\be\label{eq:exp1}
e^{x \bs A_1}-e^{x \bs A_2}=\sum_{k=1}^{N-2}\sum_{\{s\}_k\atop s_m=\pm 1}\Bigl(\prod_{j=1}^ks_j\Bigr) e^{x \bs A_{k+2}+\sum_{j=1}^k\frac{1+s_j}{2}x(\bs A_j-\bs A_{j+1})}\, ,
\ee
where the sum over $k$ has been truncated to $N-2$ by virtue of the regularisation. 
We readily see that the contribution from a sequence $\{s_1,\hdots,s_k\}$ is cancelled out by the one from  the sequence of length $N-2$ with elements $\{s_1,\hdots,s_k,-1,1,\hdots, 1\}$. Thus, only the terms with $s_m=1\, \forall m>1$ remain.  They correspond to the left hand side of \eqref{eq:exp1}, proving in turn its validity. 
If  the conjecture on the basis of \eqref{eq:exp1} is correct, we can  conclude that \eqref{eq:An0} holds true independently of whether the operators $\bs A_j$ commute or not. 
In order to dispel any doubt, we have also confirmed \eqref{eq:An0} up to $n= 6$ for generic non-commuting operators using Mathematica.  
\end{proof}

Although we considered spin chains, the proof of Lemma~\ref{l:1}  seems to be easily generalisable to higher dimensions, so the lemma is expected to hold also for $d>1$.

In conclusion, as far as quasilocal Hamiltonians are considered, the main assumption of this paper can be broken only when the initial state has power-law decaying correlations.  
 
%%%%%%%%%%%%%%%%%%%%%%%%%
\section{Alternative averages}\label{a:averages}%
%%%%%%%%%%%%%%%%%%%%%%%%%

In this appendix we consider nonuniform time averages 
\be
\bar {\bs\rho}_{t_0,t}=\int_{t_0}^{t_0+t}\mathrm d \tau \wp_t(\tau-t_0)\ket{\Psi_{\tau}}\bra{\Psi_{\tau}}\, ,
\ee
where $\wp_t$ is a probability distribution in $[0,t]$. The asymptotic behaviour of the moments of $\bar {\bs \rho}_{t_0,t}$ can be carried out as in the uniform case. Specifically, we have
\be
\mathrm{tr}[\bar {\bs \rho}_{t}^\alpha]\sim\idotsint\limits_{[0,t L^{\frac{d}{2}}]^\alpha}\frac{\mathrm d^\alpha \tau}{L^{d \frac{\alpha}{2}}}\Bigl(\prod_{j=1}^{\alpha}\wp_t(L^{-\frac{d}{2}}\tau_j)\Bigr)e^{-\K_2\frac{(\tau_\alpha-\tau_1)^2+\sum_{j=1}^{\alpha-1}(\tau_j-\tau_{j+1})^{2}}{2}}\, .
\ee
Let us change variables into 
\be
\tau_j'=\begin{cases}
\tau_j-\tau_{j+1}&j<\alpha\\
L^{-\frac{d}{2}} \tau_\alpha&j=\alpha\, ,
\end{cases}
\ee
where we rescaled (back) $\tau_\alpha$ because it does not appear in the gaussian anymore; we find
\begin{multline}
\mathrm{tr}[\bar {\bs \rho}_{t}^\alpha]\sim
\frac{1}{L^{d\frac{\alpha-1}{2}}}\int_0^{t }\mathrm d \tau'_\alpha\int_{-\tau'_\alpha L^{\frac{d}{2}}}^{(t-\tau'_\alpha)L^{\frac{d}{2}}}\!\!\!\!\!\!\!\!\!\mathrm d \tau'_{\alpha-1}\int_{-\tau'_{\alpha}L^{\frac{d}{2}}-\tau'_{\alpha-1}}^{(t-\tau'_\alpha) L^{\frac{d}{2}}-\tau'_{\alpha-1}}\!\!\!\!\!\!\!\!\!\!\!\!\mathrm d \tau'_{\alpha-2}\cdots\int_{-\tau'_\alpha L^{\frac{d}{2}}-\sum_{j=2}^{\alpha-1}\tau_j}^{(t-\tau'_\alpha) L^{\frac{d}{2}}-\sum_{j=2}^{\alpha-1}\tau_j}\!\!\!\!\!\!\mathrm d \tau'_1\\
\Bigl(\prod_{j=1}^{\alpha}\wp_t(\tau'_\alpha+L^{-\frac{d}{2}}\sum_{n=j}^{\alpha-1}\tau'_n)\Bigr)e^{-\K_2\frac{(\sum_{j=1}^{\alpha-1}\tau'_j)^2+\sum_{j=1}^{\alpha-1}(\tau'_j)^2}{2}}\, .
\end{multline}
Since the gaussian forces all the variables $\tau'_j$ with $j\in 1,\dots,\alpha-1$ to be $O(1)$, we can extend their integration domain to infinity; in addition, at the leading order, they also disappear from the argument of $\wp_t$
\begin{multline}
\mathrm{tr}[\bar {\bs \rho}_{t}^\alpha]\sim \int_0^{t }\mathrm d \tau'_\alpha[\wp_t(\tau_\alpha')]^\alpha  \idotsint\limits_{[-\infty,\infty]^{\alpha-1}}\frac{\mathrm d^{\alpha-1}\tau'}{L^{d\frac{\alpha-1}{2}}} e^{-\K_2\frac{(\sum_{j=1}^{\alpha-1}\tau'_j)^2+\sum_{j=1}^{\alpha-1}(\tau'_j)^2}{2}}=\\
\alpha^{-\frac{1}{2}}(\frac{\K_2}{2\pi})^{\frac{1-\alpha}{2}} L^{d \frac{1-\alpha}{2}}\int_0^{t }\mathrm d \tau[\wp_t(\tau)]^\alpha\, .
\end{multline}
From this, we readily obtain the R\'enyi entropies
\be
S_\alpha[\bar {\bs \rho}_{t}]=\frac{d}{2}\log L+\frac{1}{2}\log\frac{\K_2 t^2}{2\pi}+\frac{\log\alpha}{2(\alpha-1)}+\frac{1}{1-\alpha}\log\int_0^{t }\mathrm d \tau[\wp_t(\tau)]^\alpha +O(L^{-\frac{d}{2}})
\ee
and the von Neumann entropy
\be
S_{vN}[\bar {\bs \rho}_{t}]\sim \frac{d}{2}\log L+\frac{1}{2}\log\frac{\K_2}{2\pi}+\frac{1}{2}-\int_0^{t }\mathrm d \tau \wp_t(\tau) \log \wp_t(\tau)\, .
\ee
We note that the von Neumann entropy is maximal when $\wp_t(\tau)$ is uniform ($\wp_t(\tau)=\frac{1}{t}$), suggesting (as expected) that the effective dimension $\mathfrak D_t^{(\epsilon_t)}$ is maximised by the uniform average. 

The distribution of eigenvalues can be computed with the method reported in the main text, and we find
\be
\Phi_{\bar{\bs\rho}_t}(\lambda)\sim \int_0^t\mathrm d \tau \frac{L^{\frac{d}{2}}}{\pi}\sqrt{\frac{\K_2}{\log \frac{2\pi[\wp_t(\tau)]^2}{\K_2 L^d \lambda^2}}}\theta_H\Bigl(\wp_t(\tau)-\sqrt{\frac{\K_2}{2\pi}}\lambda L^{\frac{d}{2}}\Bigr)\, .
\ee
We point out that the change of scale in the eigenvalues
\be
p=\Omega\lambda\, ,\qquad \text{with}\quad \Omega=\sqrt{\frac{\K_2}{2\pi}} L^{\frac{d}{2}}\, ,
\ee
brings the distribution into a universal form
\be
\mathrm d\lambda \Phi_{\bar{\bs\rho}_t}(\lambda)\sim \mathrm d p\int_0^t \frac{\mathrm d \tau}{\sqrt{\pi \log \frac{\wp_t(\tau)}{p}}}\theta_H\Bigl(\wp_t(\tau)-p\Bigr)\, .
\ee
Finally, the dimension $\mathfrak D_t^{(\epsilon_t)}$ of the ``weighted'' space visited by the state is the solution to the following system  
\be
\ba
\epsilon_t&=\Omega\int_0^t\mathrm d \tau  \wp_t(\tau)\Bigl[1-\mathrm{erf}(\sqrt{-\log \min(\frac{p_{\epsilon_t}}{\wp_t(\tau)},1) })\Bigr]\\
\mathfrak D_t^{(\epsilon_t)}&=\frac{2 \Omega^2}{\sqrt{\pi}}\int_0^t\mathrm d \tau \theta_H(\min(\wp_t(\tau),\Omega)-p_{\epsilon_t})\Bigl[\sqrt{\log \frac{\wp_t(\tau)}{p_{\epsilon_t}}}-\sqrt{\log \min(\frac{\wp_t(\tau)}{\Omega},1)}\Bigr]\, .
\ea
\ee

%%%%%%%%%%%%%%%%%%%%%%%%%
\section{Leading correction}\label{A:correction} %
%%%%%%%%%%%%%%%%%%%%%%%%%

In this appendix we work out the leading correction to the asymptotic behaviour of the  moments $\mathrm{tr}[\bar{\bs\rho}_t^\alpha]$ when the number of the sites is large. As suggested by \eqref{eq:rho2}, that is a relative correction $O(L^{-\frac{d}{2}})$ that comes from the constraint on the integration variables, which have to sum to zero. From the representation \eqref{eq:rhoalpha} it follows that the higher order cumulants start contributing at relative order $O(L^{-d})$, therefore our starting point is the second line of \eqref{eq:malpha}. The leading correction becomes visible if we integrate first in $\tau'_\alpha$. To that aim, we must move the first integral to the right, like we did in \eqref{eq:rho2}.  The result is 
\be
\idotsint\limits_{\tau'_{\alpha-j}\in[-t L^{\frac{d}{2}}+\max(T_j),t L^{\frac{d}{2}}+\min(T_j)]}\frac{\mathrm d^{\alpha-1}\tau'}{t^{\alpha-1} L^{d\frac{\alpha-1}{2}}}\Bigl(1+\frac{\min(T_\alpha)-\max(T_\alpha)}{t L^{\frac{d}{2}}}\Bigr)e^{-\K_2 \frac{(\sum_{j=1}^{\alpha-1}\tau'_j)^2+\sum_{j=1}^{\alpha-1}(\tau'_j)^{2}}{2}}\, ,
\ee
where $T_1=\{0\}$ and $T_j=T_{j-1}\cup \{-\sum_{n=1}^{j-1}\tau'_{\alpha-n}\}$. The leading term, which we already computed, comes from the $1$ in the round bracket; the other term includes the leading correction. The domain of integration is rather complicated, but, for an asymptotically large number of sites, it can be extended to infinity. In addition, by reverting the sign of all $\tau'_j$, we realise that the contribution from the term proportional to $\min(T_\alpha)$ is equal to the contribution from the one proportional to $-\max(T_\alpha)$. Thus we have
\be
\mathrm{tr}[\bar{\bs \rho}_t^\alpha]-\alpha^{-\frac{1}{2}}(\frac{\K_2}{2\pi})^{\frac{1-\alpha}{2}}t^{1-\alpha} L^{d \frac{1-\alpha}{2}}\sim -2\idotsint\limits_{[-\infty,\infty]^{\alpha-1}}\frac{\mathrm d^{\alpha-1}\tau}{t^{\alpha} L^{d\frac{\alpha}{2}}}\max(T_\alpha)e^{-\K_2 \frac{(\sum_{j=1}^{\alpha-1}\tau'_j)^2+\sum_{j=1}^{\alpha-1}{\tau'_j}^{2}}{2}}\, .
\ee
This expression can be simplified further by defining the new variables
\be
y_j=\sqrt{\K_2 }\sum_{n=j}^{\alpha-1}\tau'_j
\ee
and summing over the $\alpha-1$ possibilities for which $\max(T_\alpha)=\K_2^{-\nicefrac{1}{2}}\max(0,y_j)$. We finally obtain
\begin{multline}
\mathrm{tr}[\bar{\bs \rho}_t^\alpha]-\alpha^{-\frac{1}{2}}(\frac{\K_2}{2\pi})^{\frac{1-\alpha}{2}}t^{1-\alpha} L^{d \frac{1-\alpha}{2}}\sim \\
-2 \K_2^{-\frac{\alpha}{2}}t^{-\alpha}L^{-d\frac{\alpha}{2}}\idotsint\limits_{[0,\infty]^{\alpha-1}}\mathrm d^{\alpha-1}y\sum_{j=1}^{\alpha-1}y_1e^{-\frac{y_{1}^2+\sum_{j=1}^{\alpha-2}(y_j-y_{j+1})^2+y_{\alpha-1}^2}{2}}\, .
\end{multline}
We have not found a closed form expression for the gaussian integral on the right hand side of the equation (already for $\alpha=4$ we end up with an integral of the error function). This makes it trickier to compute the leading correction in the distribution of eigenvalues, which we leave to future investigations.

%%%%%%%%%%%%%%%%%%%%%%%%%%
\section{Numerical checks} \label{a:numchecks}  %
%%%%%%%%%%%%%%%%%%%%%%%%%%

In this appending we report some numerical checks of our findings in spin chains ($d=1$). 

%%%%%%%%%%%%%%%%%%%%%%
\paragraph{Transverse field Ising chain.}  %
%%%%%%%%%%%%%%%%%%%%%%

\begin{figure}[t]
%\begin{center}
\hspace{2cm}\includegraphics[width=0.8\textwidth]{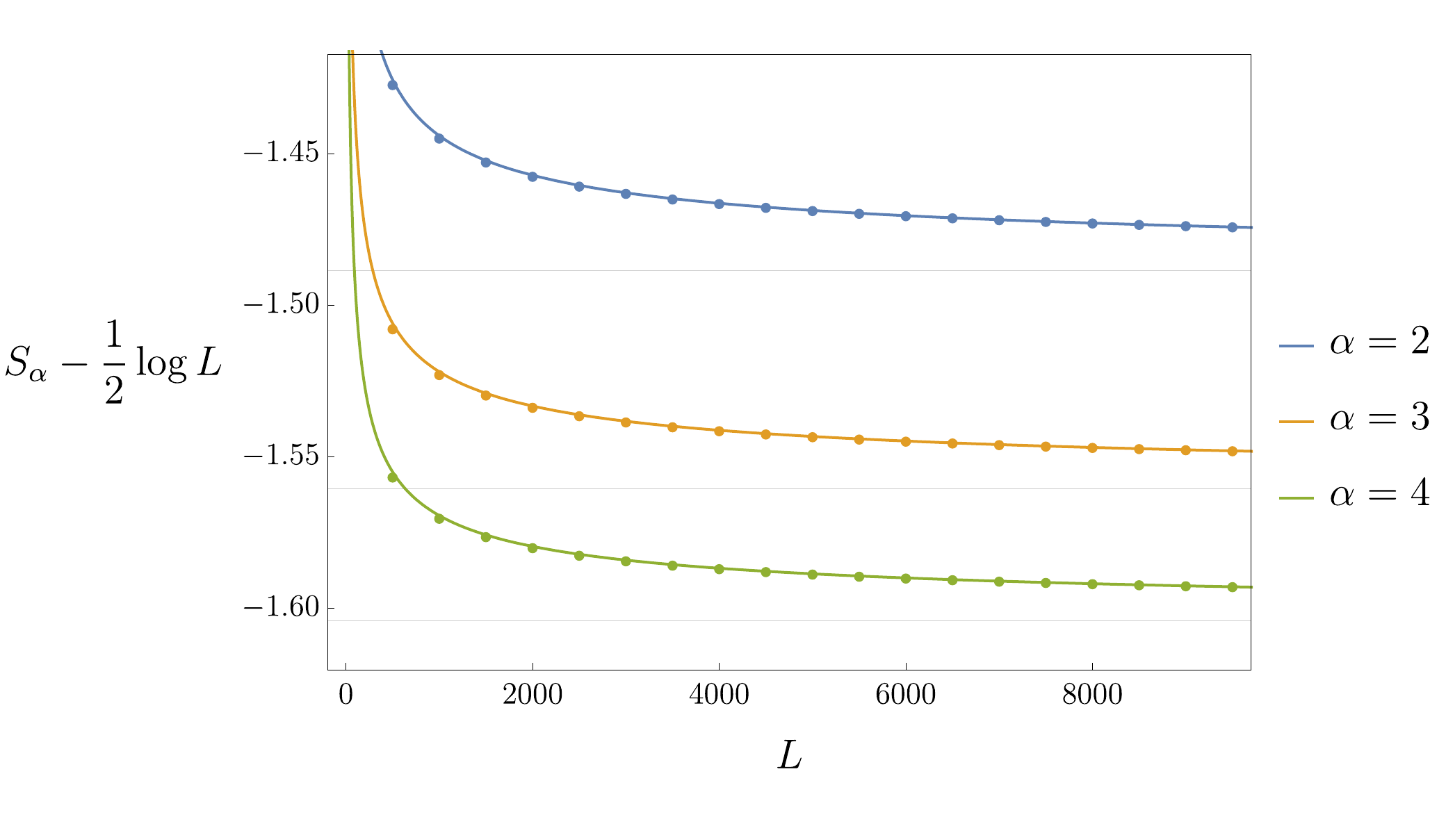}
\caption{The R\'enyi entropies $S_{\alpha}$, with $\alpha=2,3,4$, of the time averaged state in the time window $[0,0.4J^{-1}]$ after a quench of the magnetic field $h=\infty\rightarrow 1.5$ in the transverse field Ising chain~\eqref{eq:Ising}. The dots correspond to numerical evaluations of \eqref{eq:rhoalpha}. The curves are the asymptotic predictions plus the leading correction computed in Appendix~\ref{A:correction}. The horizontal lines show the limit $L\rightarrow\infty$ (which are the predictions without the leading correction).  }
\label{f:S_Ising}
%\end{center}
\end{figure}

The transverse field Ising chain is described by the Hamiltonian
\be\label{eq:Ising}
\bs  H(h)=-J\sum_\ell \Bigl(\bs \sigma_\ell^x\bs \sigma_{\ell+1}^x+h\bs\sigma_\ell^z\Bigr)\, ,
\ee
where $\bs \sigma_\ell^\alpha\equiv  \cdots\otimes \1_{\ell-2}\otimes \1_{\ell-1}\otimes\sigma_\ell^\alpha\otimes\1_{\ell+1}\otimes\1_{\ell+2}\otimes\cdots$ acts like the Pauli matrix $\sigma_\ell^\alpha$ ($\alpha\in\{x,y,z\}$) on site $\ell$ and like the identity elsewhere. 
A Jordan-Wigner transformation maps the spin chain into a chain of fermions, and the resulting Hamiltonian consists of two sectors where it acts like a quadratic form. This allows for exact diagonalization, and also for the exact solution of the dynamics when the initial state is a Slater determinant. For example, one can easily compute the function $f(t)$ defined in \eqref{eq:dynfreeen} after a global quench of $h:h_i\rightarrow h_f$ (the system is prepared in the ground state of $\bs  H(h_i)$ and then let to evolve under $\bs  H(h_f)$)~\cite{dynphasetrans}
\be\label{eq:fIsing}
f(t)=\int_0^{\pi}\frac{\mathrm d k }{2\pi}\log \Bigl(\frac{1+\cos\Delta_k}{2}+\frac{1-\cos\Delta_k}{2}e^{2i\varepsilon_k t}\Bigr)\, ,
\ee
where 
\be
\ba
\varepsilon_k&=2 J\sqrt{1+h_f^2-2h_f\cos k}\\
\cos\Delta_k&=\frac{(h_f-\cos k)(h_i-\cos k)+\sin k^2}{\sqrt{1+h_f^2-2h_f\cos k}\sqrt{1+h_i^2-2h_i\cos k}}\, .
\ea
\ee
We are therefore in a position to check our predictions for the first R\'enyi entropies of $\bar{\bs\rho}_t$ in the limit of large $L$. Figure \ref{f:S_Ising} shows a comparison between the numerical evaluation of the entropies\footnote{As a matter of fact, we have dropped some exponentially small finite-size effects by replacing a sum by an integral in the logarithm of the overlap, \emph{i.e.}, by writing $\braket{\Psi_0|e^{i H t}|\Psi_0}=\exp(L f(t))$, with $f(t)$ given by \eqref{eq:fIsing}.}  (as logarithms of multidimensional integrals) and our asymptotic predictions. The agreement is excellent.

%%%%%%%%%%%%%%%%%%%
\paragraph{Exact diagonalization.}  %
%%%%%%%%%%%%%%%%%%%

\begin{figure}[p]
%\begin{center}
\hspace{2cm}\includegraphics[width=0.8\textwidth]{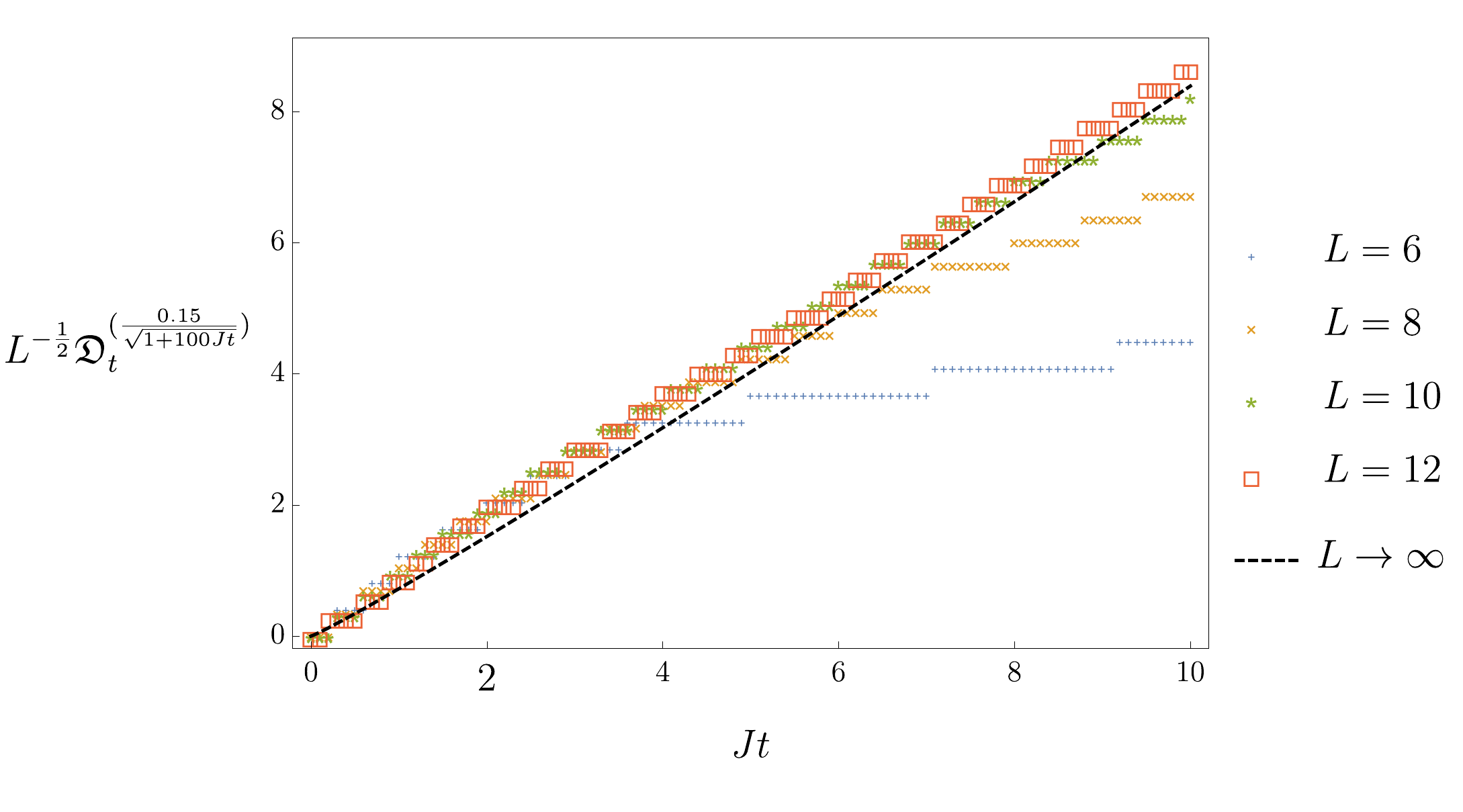}
\caption{The minimal number of eigenstates per unit $\sqrt{L}$ with probability larger or equal to $1-\epsilon_t$, with $\epsilon_t=\nicefrac{0.15}{\sqrt{1+100 J t}}$ in a small spin-$\frac{1}{2}$ chain with $L=6,8,10,12$ sites, obtained using exact diagonalization techniques. The initial state is the ground state of the ferromagnetic Ising Hamiltonian $\bs H_0=-J\sum_\ell(\bs \sigma_\ell^x \bs \sigma_{\ell+1}^x+2\bs \sigma_\ell^y)$; time evolution is generated by $\bs H=J\sum_\ell \bs \sigma_\ell^y \bs \sigma_{\ell+1}^y + 0.5 \bs \sigma_\ell^x \bs \sigma_{\ell+1}^x + 1.5 \bs \sigma_\ell^z \bs \sigma_{\ell+1}^z + 0.25 \bs \sigma_\ell^x + 0.3 (-1)^\ell \bs \sigma_\ell^z$. The dashed line is the prediction \eqref{eq:pred}. Data seem to collapse to the prediction rather quickly. }
\label{f:ED}
%\end{center}
\end{figure}

\begin{figure}[!p]
%\begin{center}
\includegraphics[width=0.48\textwidth]{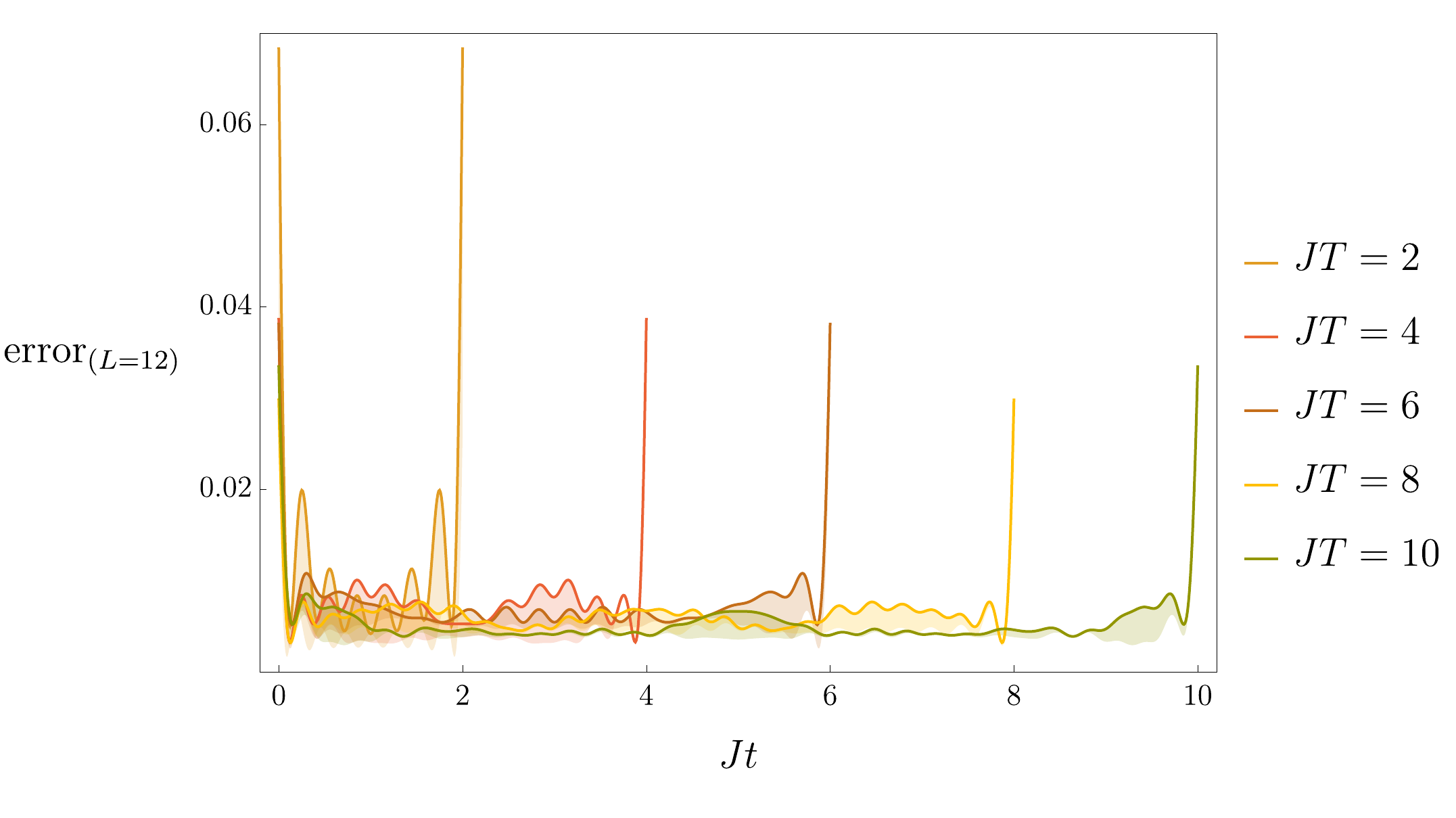}\includegraphics[width=0.48\textwidth]{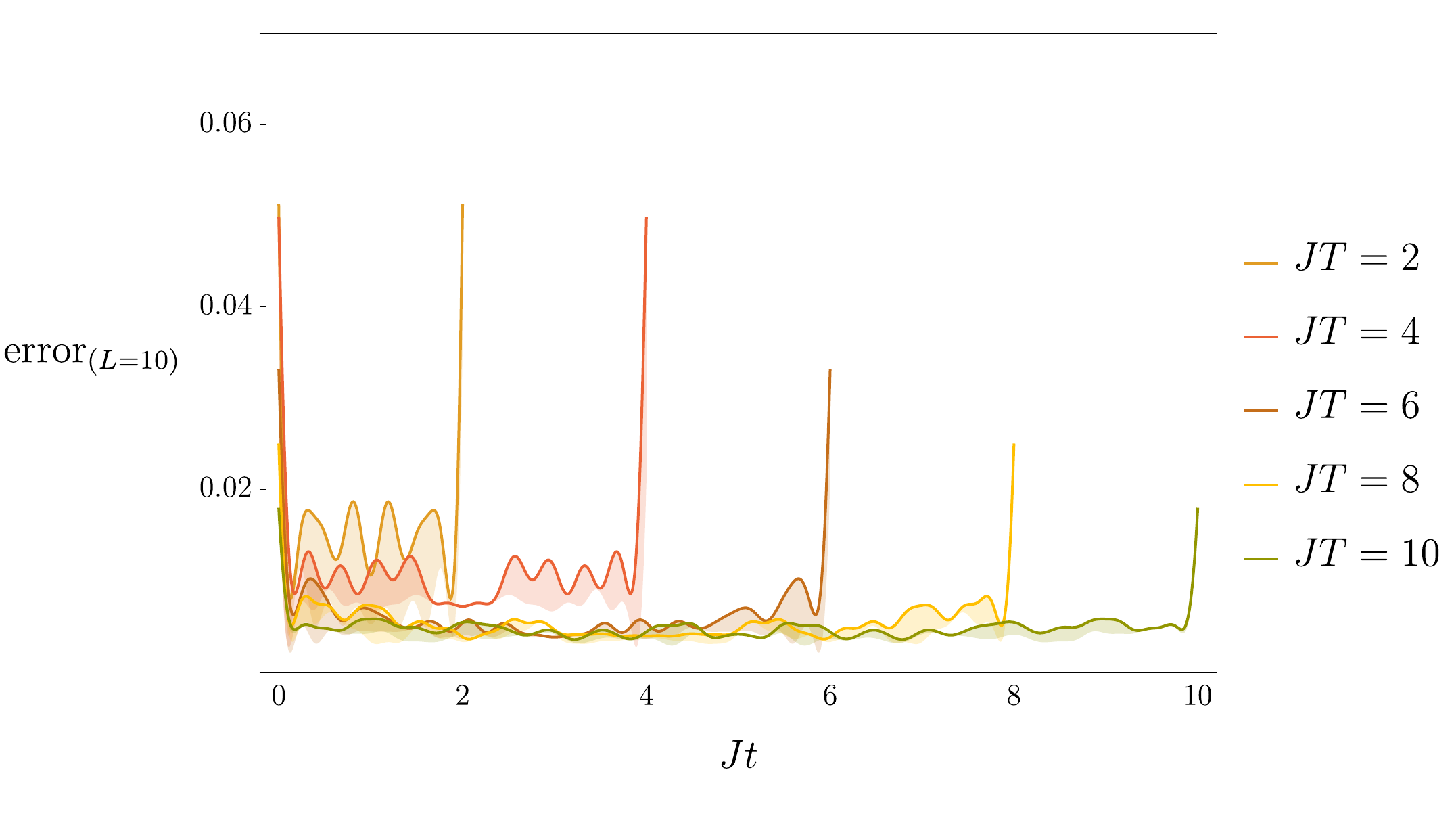}\\
\includegraphics[width=0.48\textwidth]{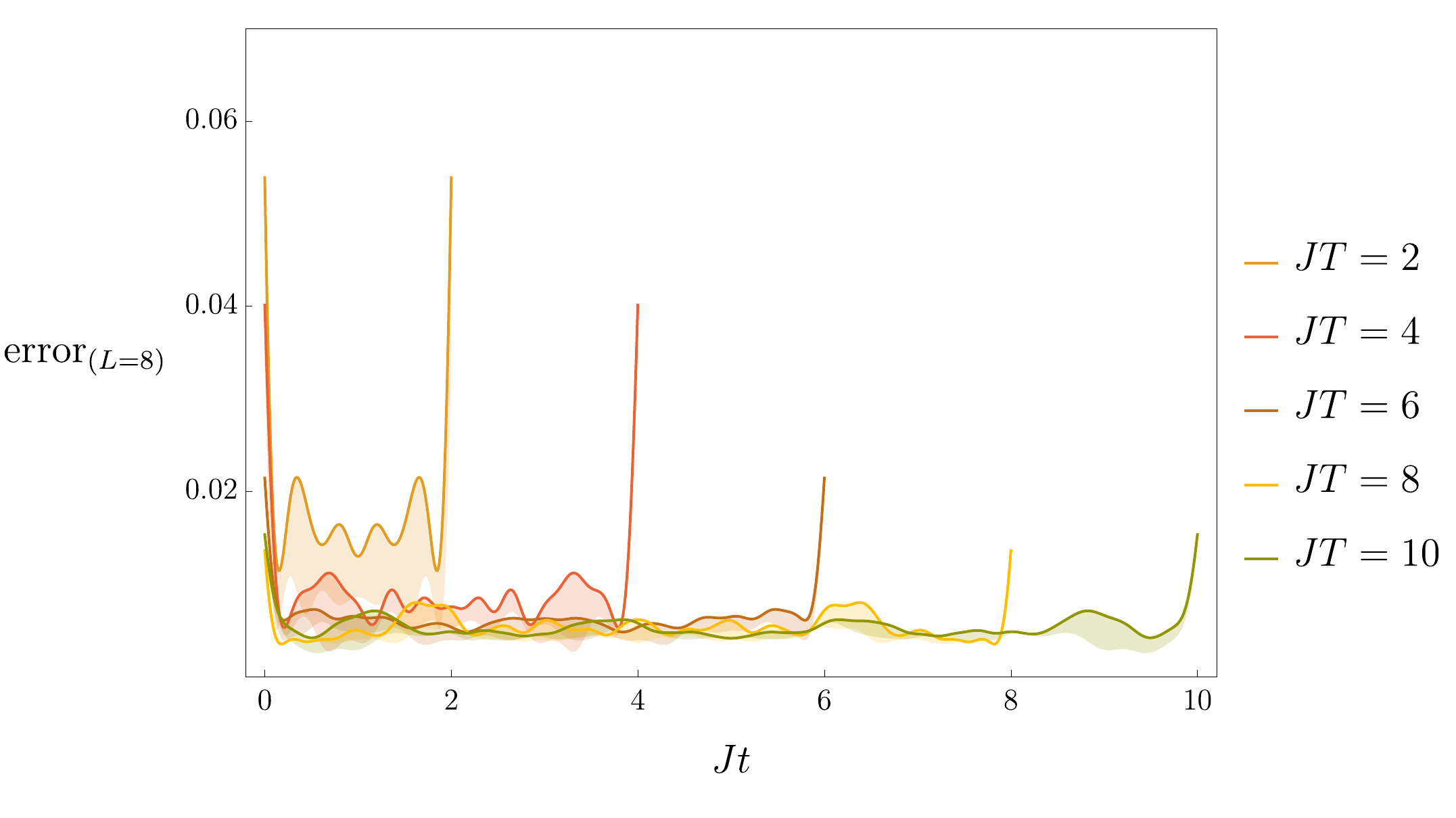}
\includegraphics[width=0.48\textwidth]{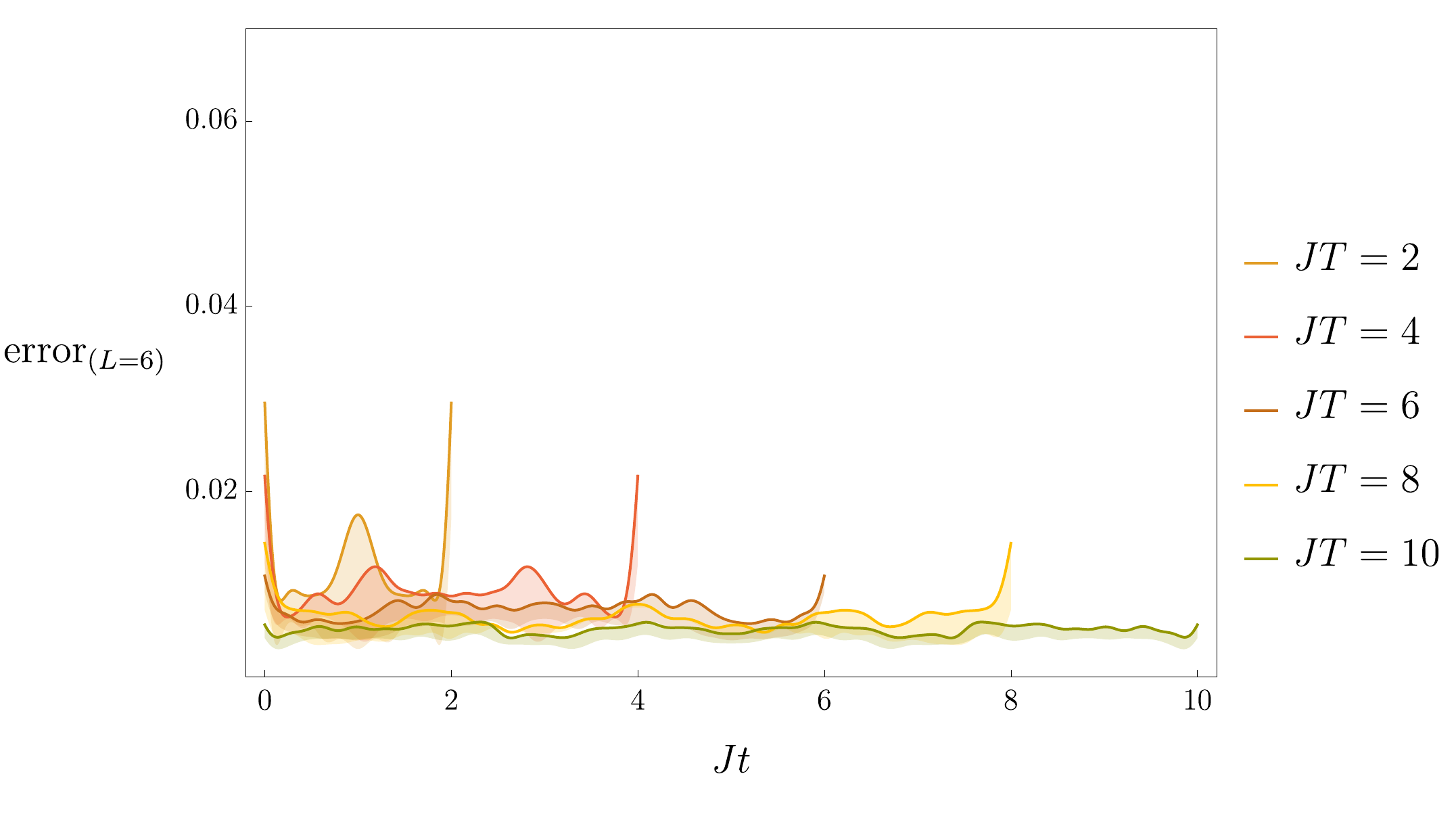}
\caption{The error on the state at time $t$ induced by the Ansatz $\epsilon_t\sim t^{-\frac{1}{2}}$ (specifically, $\epsilon_t=\nicefrac{0.15}{\sqrt{1+100 J t}}$) for the same parameters as in figure~\ref{f:ED} in chains with $L=12,10,8,6$ sites. 
Each curve corresponds to projecting onto the reduced space associated with the time window $[0,T]$. The shades below the curves represent an indetermination coming from the fact that $\epsilon_t$ can be enforced only approximately because the error made by reducing the space is in fact quantised. }
\label{f:errED}
%\end{center}
\end{figure}

\begin{figure}[p]
%\begin{center}
\hspace{2cm}\includegraphics[width=0.8\textwidth]{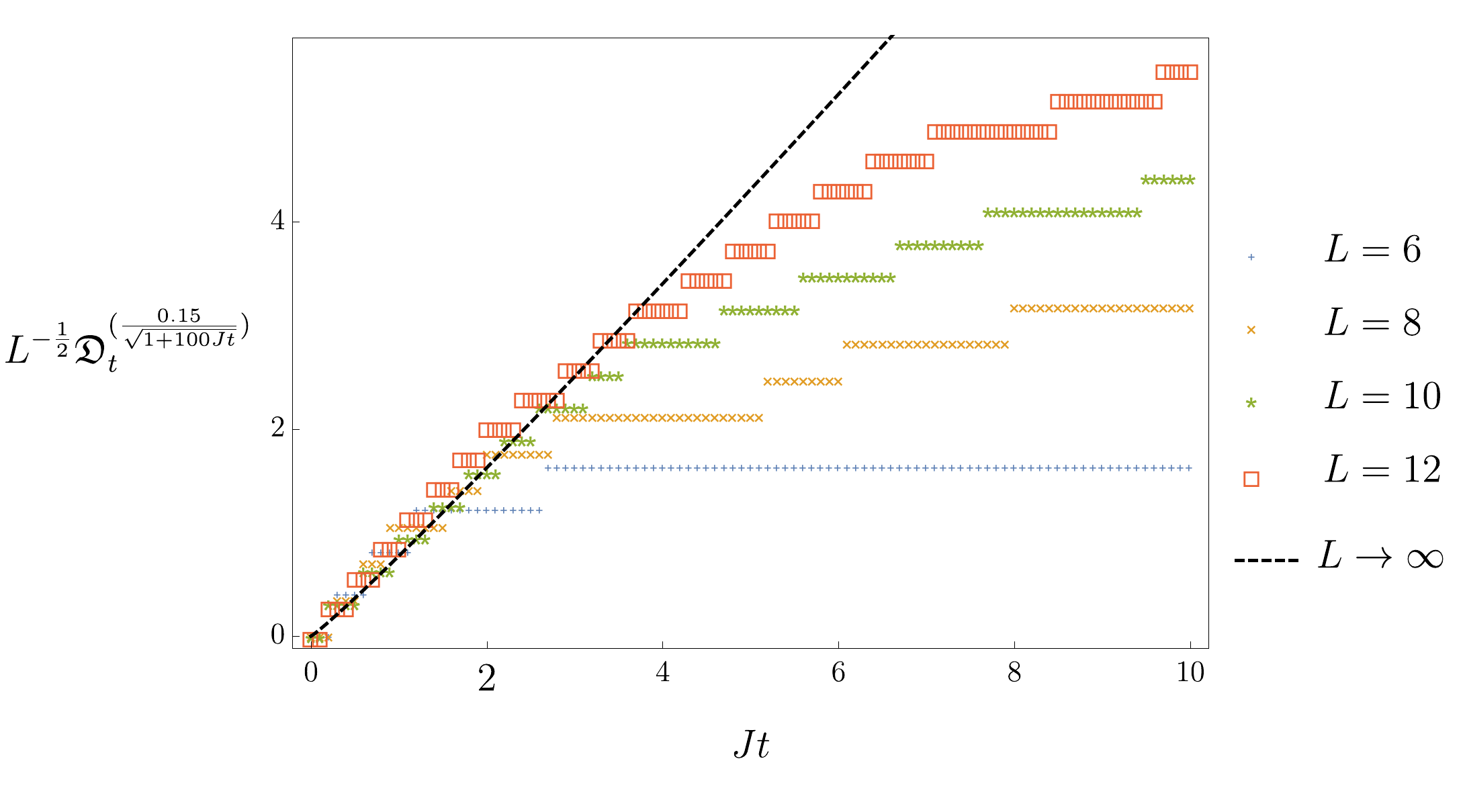}
\caption{The same as in figure~\ref{f:ED} but for a different system. Here the initial state is fully polarised in the $z$ direction and time evolution is generated by $\bs H=J\sum_\ell \bs\sigma_\ell^x \bs\sigma_{\ell+1}^x+2\bs\sigma_\ell^y \bs\sigma_{\ell+1}^y +  \bs\sigma_\ell^z \bs\sigma_{\ell+1}^z $, which describes an integrable system. }
\label{f:ED1}
%\end{center}
\end{figure}

\begin{figure}[!p]
%\begin{center}
\includegraphics[width=0.48\textwidth]{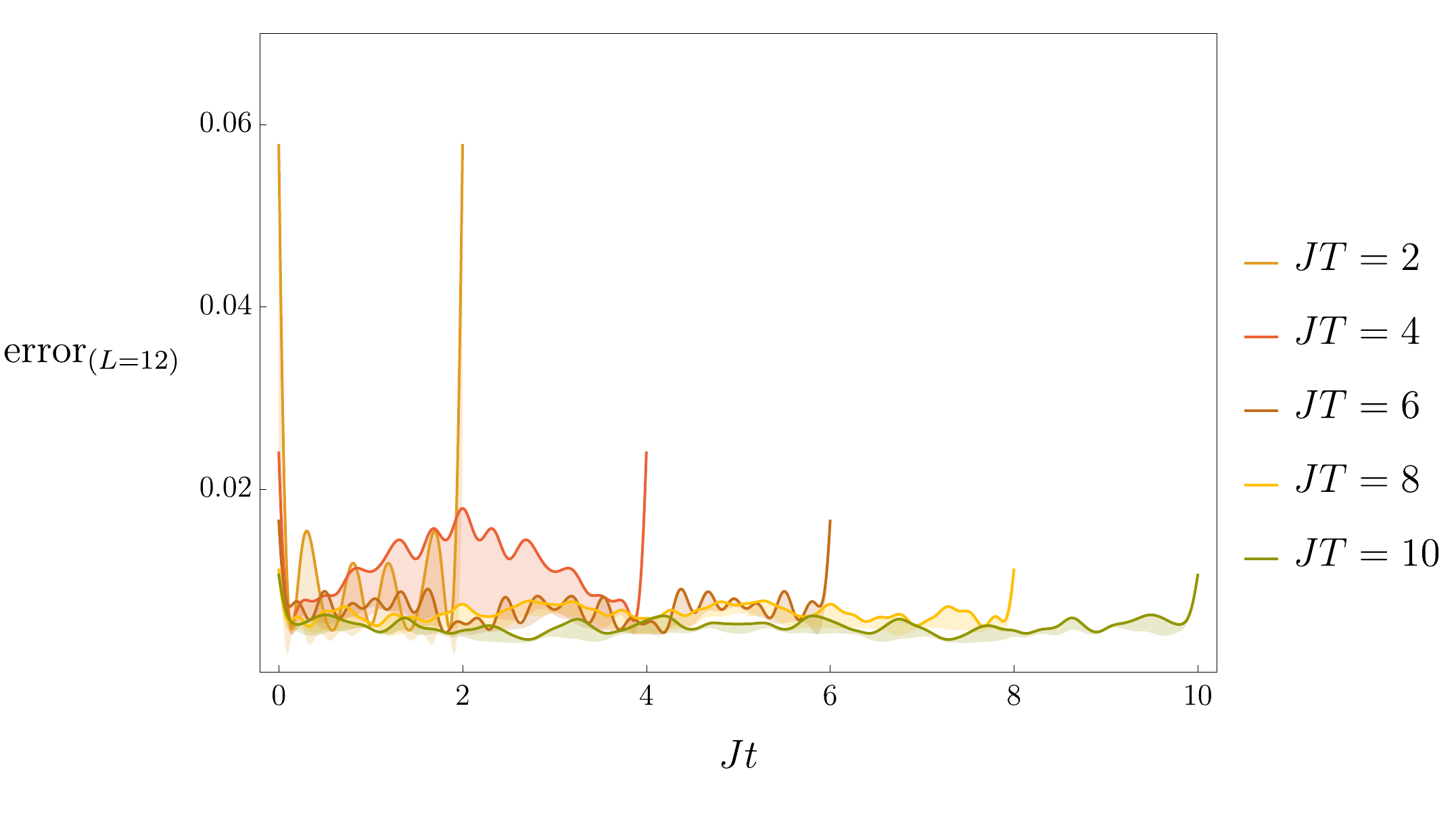}\includegraphics[width=0.48\textwidth]{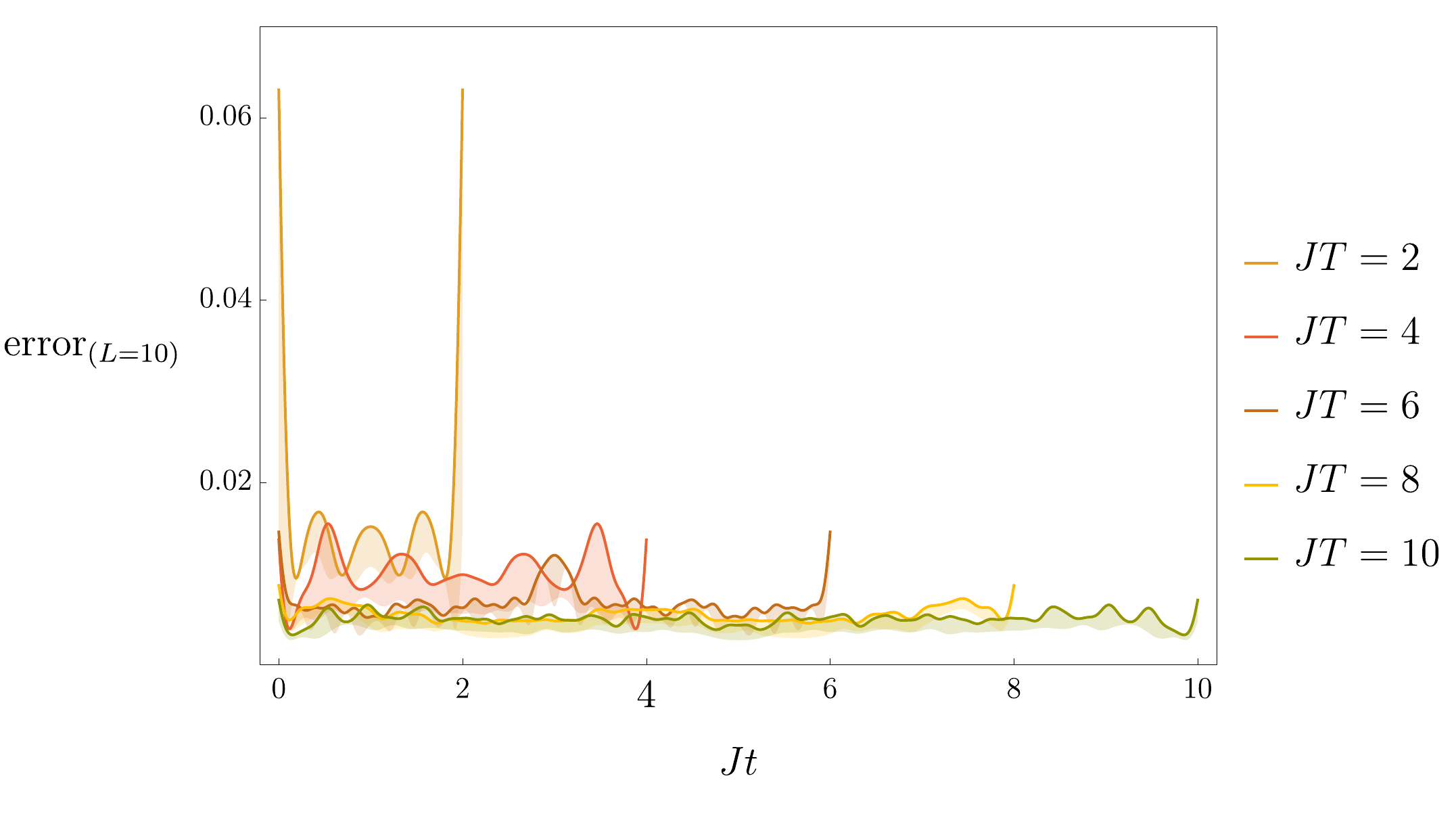}\\
\includegraphics[width=0.48\textwidth]{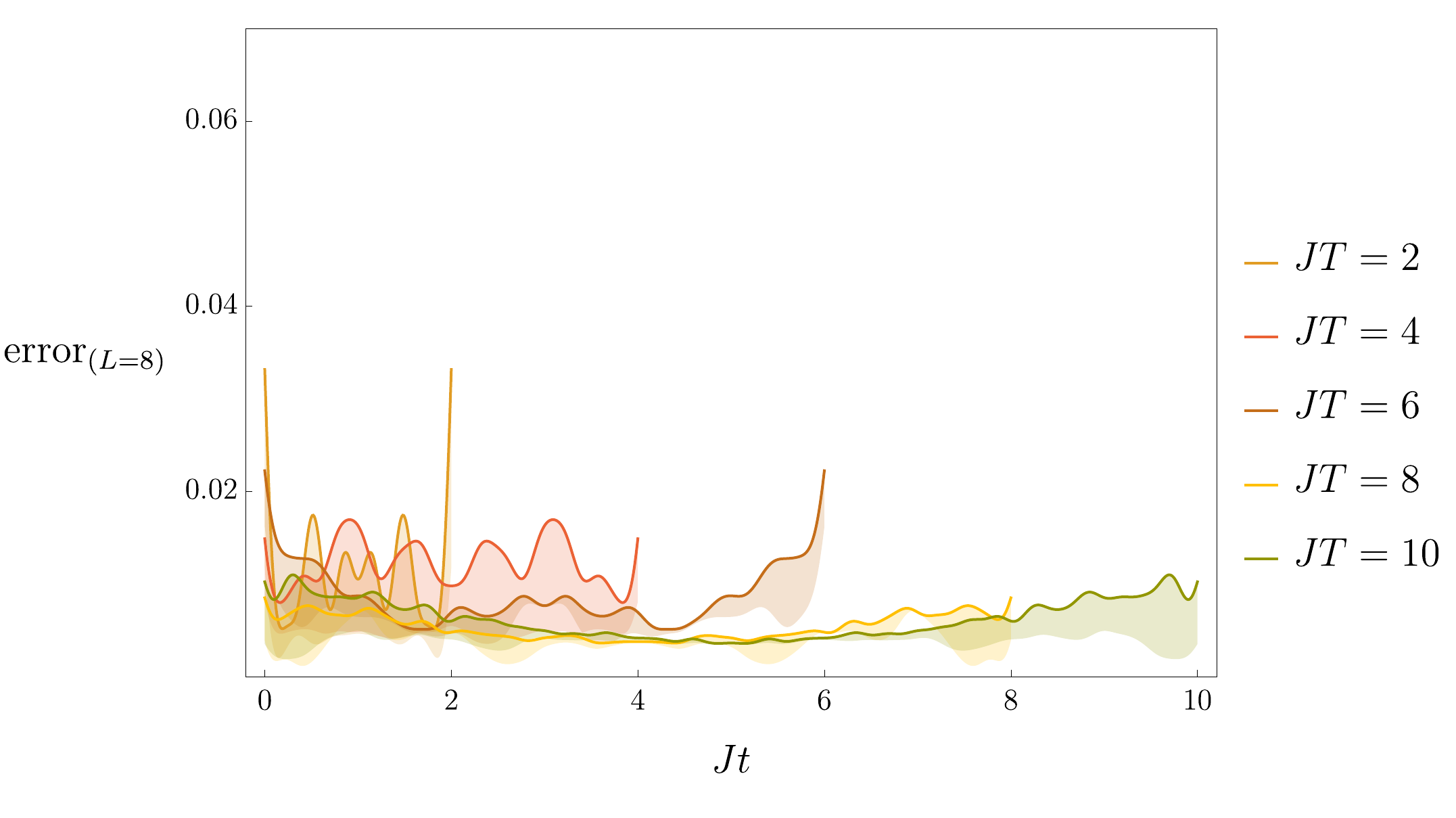}
\includegraphics[width=0.48\textwidth]{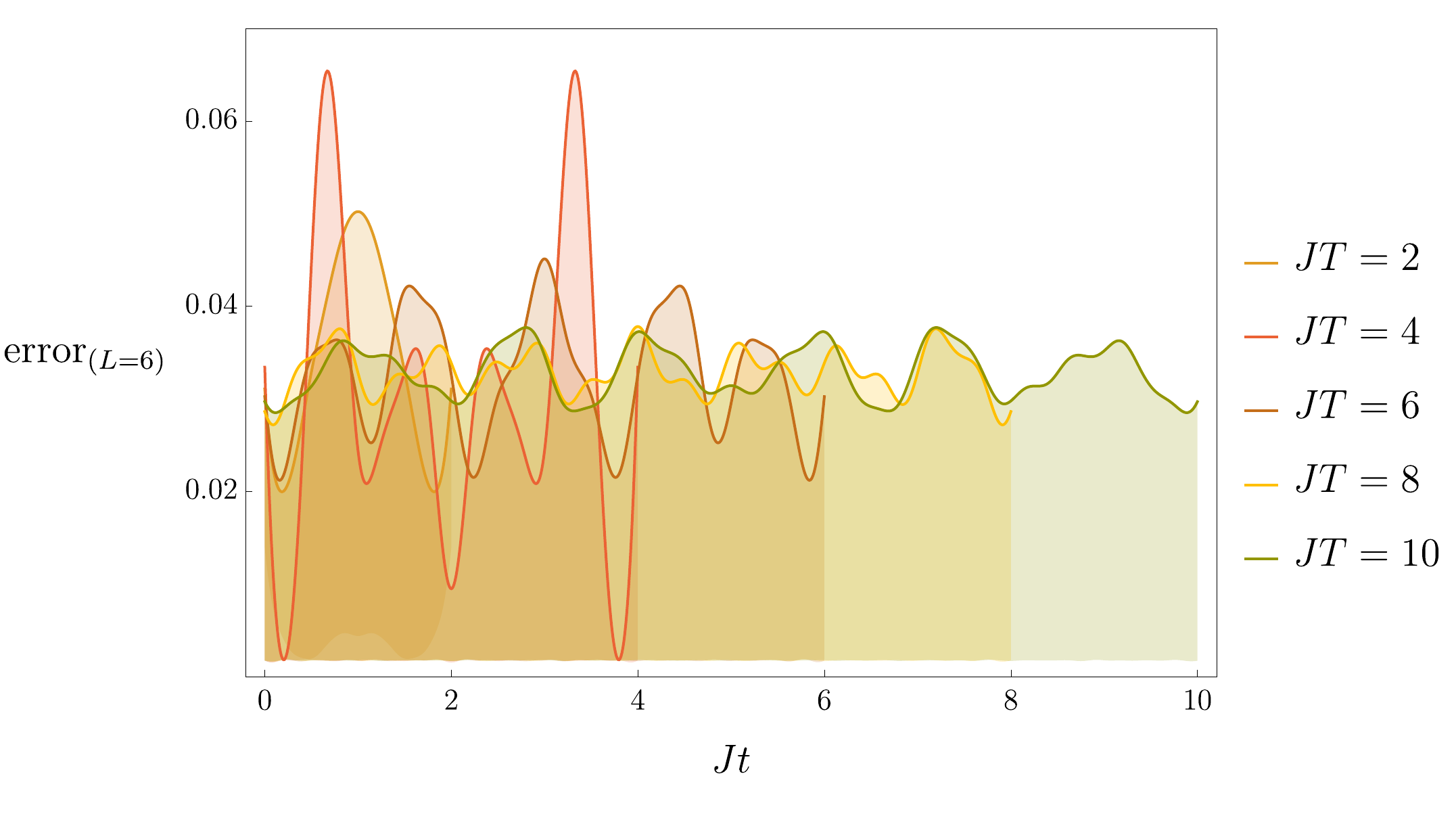}
\caption{The same as in figure~\ref{f:errED} for the system of figure~\ref{f:ED1}.}
\label{f:errED1}
%\end{center}
\end{figure}

Despite giving access to the first R\'enyi entropies, the exactly solvable model considered in the previous paragraph does not allow us to easily check the dimension of the relevant subspace. To overcome this problem, we have carried out a numerical analysis based on exact diagonalization algorithms in small spin chains with rather generic Hamiltonians. In principle our prediction is not expected to be accurate, as it was derived in the opposite limit of large $L$;  the agreement between numerical data and prediction is nevertheless surprisingly good, as shown in figures~\ref{f:ED} and \ref{f:ED1}.  

We also checked that our Ansatz $\epsilon_t\sim \epsilon_{\delta t}\sqrt{\nicefrac{\delta t}{t}}$ generates a subspace approximating the time evolving state with an accuracy that increases with the time. To that aim, we have computed
\be
{\rm error}_{(L)}(t,T)=1-\braket{\Psi_t|\theta_H(\bar{\bs \rho}_T-\lambda_{\epsilon_T})|\Psi_t}\qquad t\in[0,T]\, ,
\ee
which is the error made by projecting the state at the time $t$ onto the subspace corresponding to $\epsilon_T$. As shown in figures \ref{f:errED} and \ref{f:errED1} , the numerical data confirm that this Ansatz provides an upper bound to the dimension of the relevant space.  In all the examples that we considered, the maximal error is associated with the boundaries of the interval. 

\bibliographystyle{SciPost_bibstyle}

% SECOND OPTION:
% Use your bibtex library
% \bibliographystyle{SciPost_bibstyle} % Include this style file here only if you are not using our template
%\bibliography{SciPost_Example_BiBTeX_File.bib}

%\nolinenumbers
%\begin{verbatim}
%\end{verbatim}
\end{appendix}
\end{document}